\documentclass[11pt,letterpaper]{article}
\usepackage[margin=1in]{geometry}

\usepackage{amsmath,amsthm}
\usepackage{amssymb,enumerate}
\usepackage{bm}
\usepackage[dvipsnames]{xcolor}
\usepackage{float}
\usepackage[T1]{fontenc}
\usepackage[pdftex]{graphicx}
\usepackage{comment}
\usepackage{multirow}
\usepackage{tabularx, booktabs}
\usepackage[normalem]{ulem}
\usepackage{tikz}
\usepackage{bm}
\usepackage[numbers, sort]{natbib}
\usepackage[export]{adjustbox}
\usepackage{pifont}
\usepackage{array}
\usepackage{caption}
\usepackage{subcaption}
\usepackage{crossreftools}

\usepackage{authblk}

\usepackage[pdftex,colorlinks=true,linkcolor=blue,citecolor=blue,urlcolor=blue]{hyperref}

\numberwithin{equation}{section}
\graphicspath{{./figures/}}


\newcommand{\bra}[1]{\langle #1 |}
\newcommand{\ket}[1]{| #1 \rangle}
\newcommand{\braket}[1]{\langle #1 \rangle}

\newcommand{\pder}[2][]{\frac{\partial #1}{\partial #2}}

\newcommand{\E}{\mathbb{E}}
\newcommand{\R}{\mathbb{R}}
\newcommand{\Z}{\mathbb{Z}}

\newcommand{\vect}[1]{\boldsymbol{#1}}
\newcommand{\av}{{\vect{a}}}
\newcommand{\bv}{{\vect{b}}}
\newcommand{\xv}{{\vect{x}}}
\newcommand{\yv}{{\vect{y}}}
\newcommand{\sv}{{\vect{s}}}

\newcommand{\xiv}{{\vect{\xi}}}
\newcommand{\paramv}{{\vect{\beta},\vect{\gamma}}}
\newcommand{\param}{{\beta,\gamma}}
\newcommand{\tN}{\textrm{N}}
\newcommand{\alg}[1]{{\texttt{#1}}}
\newcommand{\Id}{\vect{1}}

\newcommand{\G}[1]{{[#1]}}

\newcolumntype{Y}{>{\centering\arraybackslash}X}

\newtheorem{thm}{Theorem}
\newtheorem*{thm*}{Theorem}

\newtheorem{lem}[thm]{Lemma}
\newtheorem*{lem*}{Lemma}
\newtheorem{claim}[thm]{Claim}

\newtheorem{assump}[thm]{Assumption}
\newtheorem{defn}[thm]{Definition}

\theoremstyle{definition}
\newtheorem{rem}[thm]{Remark}

\numberwithin{thm}{section}

\DeclareMathOperator{\poly}{poly}

\begin{document}

\title{Quantum speedups in solving near-symmetric optimization problems by low-depth QAOA}

\author[1,2]{Ashley Montanaro\thanks{ashley@phasecraft.io}}
\affil{Phasecraft Ltd.}
\author[1,3]{Leo Zhou\thanks{leoxzhou@ucla.edu}}
\affil{University of Bristol}
\affil[3]{University of California, Los Angeles}

\date{February 24, 2025}
\maketitle

\begin{abstract}
We present new advances towards achieving exponential quantum speedups for solving optimization problems by low-depth quantum algorithms.
Specifically, we focus on families of combinatorial optimization problems that exhibit symmetry and contain planted solutions.
We rigorously prove that the 1-step Quantum Approximate Optimization Algorithm (QAOA) can achieve a success probability of $\Omega(1/\sqrt{n})$, and sometimes $\Omega(1)$, for finding the exact solution in many cases. 
This allows us to prove a separation of $O(1)$ quantum queries and $\Omega(n/\log n)$ classical queries required to find the planted solution in the latter setting.
Furthermore, we construct near-symmetric optimization problems by randomly sampling the individual clauses of symmetric problems, and prove that the QAOA maintains a strong success probability in this setting even when the symmetry is broken.
Finally, we construct various families of near-symmetric Max-SAT problems and benchmark state-of-the-art classical solvers, discovering instances where all known general-purpose classical algorithms require exponential time.
Therefore, our results indicate that low-depth QAOA may achieve an exponential quantum speedup for optimization problems.
\end{abstract}

\tableofcontents

\section{Introduction}
Optimization problems are critical to a wide range of real-world applications, and efficiently solving these problems is of great practical importance across many fields in science and industry. Quantum computers hold the promise of solving certain optimization problems faster than classical algorithms, offering potential breakthroughs in speed and efficiency. However, while there is hope for quantum speedups, we currently lack strong evidence that near-term quantum computers with limited circuit depth can achieve a substantial advantage over classical methods.

In this work, we provide new evidence that a low-depth quantum algorithm, the Quantum Approximate Optimization Algorithm (QAOA) \cite{farhi2014quantum}, can solve families of near-symmetric optimization problems exponentially faster than the best known classical algorithms. The QAOA has been proposed as a general-purpose quantum optimization algorithm that can be run on near-term quantum computers, and has seen implementation across a variety of experimental platforms~\cite{Pagano2020QAOA, harrigan2021quantum,Ebadi2022quantum, Shaydulin2024QAOA}.
In the low-depth regime, however, limitations of the QAOA has been proven for various problems~\cite{bravyi2020obstacles, farhi2020quantumWhole1, farhi2020quantumWhole2, chou2021limitations, basso2022performance, anshu2023concentration, chen2023local}.
While there has been some evidence that shows low-depth QAOA can still provide a quantum speedup for both approximate \cite{basso2021quantum} and exact optimization \cite{boulebnane2024SAT}, they appear to be only small polynomial speedups.
In contrast, our results demonstrate that for problems possessing some level of symmetry, an exponential speedup with the QAOA is possible even in the low-depth regime.

The symmetric optimization problems we consider are maximum constraint satisfaction problems (Max-CSPs) that are defined with a planted $n$-bit string and has a cost function that exhibit certain symmetry. For example, when the symmetry is the symmetric group $S_n$ permuting all $n$ bits, the cost function depends only on the Hamming distance to the planted bit string. Similar families of problems have been previously studied in the context of quantum annealing, which may take polynomial or exponential time to find the solution depending on the problem~\cite{vandam2001FOCS, farhi2002QAAvsSA, reichardt2004adiabatic, muthukrishnan2016}. Although a quantum speedup relative to certain general-purpose classical algorithms may be obtained in some cases, these symmetric problems are susceptible to attacks by tailored classical algorithms that know and take advantage of the symmetry. In this work, however, we take a step further to consider the situation where the symmetry is broken by randomly sampling the clauses in the cost function, and we call these problems ``near-symmetric.'' We also instantiate these problem explicitly in the form of Boolean satisfiability problems and benchmark them with state-of-the-art classical algorithms.

As our main result, we prove that the 1-step QAOA can solve many symmetric and near-symmetric problem in polynomial time.
Our findings are based on analytically deriving the success probability of 1-step QAOA in finding the planted solution using combinatorial calculations and a rigorous application of the saddle-point method.
For example, given any $S_n$-symmetric cost function that takes on $n+1$ distinct values for the different Hamming distances, we show the 1-step QAOA has $\Omega(1/\sqrt{n})$ success probability in finding the solution (Theorem~\ref{thm:distinct-values}).
We also consider various problems with either $S_n$ or $(S_{n/2})^2$ symmetry and multiple local minima that confuse classical algorithms, and prove that 1-step QAOA succeeds in finding the global minima with $\Omega(1)$ probability (Theorems~\ref{thm:unit-prob} and \ref{thm:prod-sym}).
Furthermore, leveraging the high success probability from a low-depth QAOA, we show that the QAOA also succeed with similar probability even when the symmetry is broken by random sparsification of the cost function (Theorem~\ref{thm:sparsified}).
Hence, using repetitions of the 1-step QAOA, one can find the solution in polynomial time with $O(\sqrt{n})$ or $O(1)$ queries to the cost function for these symmetric and near-symmetric problems.
On the other hand, we show that any classical algorithm requires at least $\Omega(n/\log n)$ queries to the cost function to solve any $S_n$-symmetric problem, even if the symmetry is known in advance (Claim~\ref{cl:alg-for-distinct}).

To explore quantum speedups, we explicitly construct instances of symmetric and near-symmetric Max-SAT problems, and study the performance of practical classical optimization algorithm such as simulated annealing and state-of-the-art classical SAT and Max-SAT solvers.
In particular, we consider algorithms that are front-runners from recent SAT and Max-SAT competitions, and perform numerical experiments up to hundreds of bits to extract their run time scaling.
For some instances, we show that all classical solvers known to us take exponential time to find the solution.
Since one can solve these problems in polynomial time using 1-step QAOA, we have what appears to be an exponential quantum speedup over general-purpose classical algorithms by low-depth QAOA.
Our results highlight the potential for significant quantum speedups even with constrained capabilities of near-term quantum devices, bringing us closer to realizing practical quantum advantages in optimization.

\section{Background}

\subsection{Quantum approximate optimization algorithm}
The QAOA is a quantum algorithm introduced by \cite{farhi2014quantum} for finding approximate solutions to combinatorial optimization problems.
The problem task is to minimize a cost function, which counts the number (or total weight) of clauses not satisfied by an input bit string.
Given a cost function $C(\xv)$ on bit strings $\xv \in \{0,1\}^n$,  we can define a corresponding quantum operator $C$, diagonal in the computational basis, as $C\ket{\xv} = C(\xv) \ket{\xv}$.
Moreover, we introduce the operator $B = \sum_{j=1}^n X_j$, where $X_j$ is the Pauli $X$ operator acting on qubit $j$.
Given a set of parameters $\vect\gamma = (\gamma_1,\gamma_2,\ldots,\gamma_p) \in \R^p$ and $\vect\beta = (\beta_1,\beta_2,\ldots,\beta_p) \in \R^p$, 
the $p$-step QAOA initializes the system of qubits in the uniform superposition of all bit strings, $\ket{s} = 2^{-n/2}\sum_{\xv}\ket{\xv}$, and applies $p$ alternating layers of unitary operations $e^{i\gamma_k C}$ and $e^{i\beta_k B}$ to prepare the state
\begin{align} \label{eq:wavefunction}
\ket{\psi_\paramv} = e^{i\beta_p B} e^{i\gamma_p C} \cdots e^{i\beta_1 B} e^{i\gamma_1 C} \ket{s}.
\end{align}
Finally, the state $\ket{\psi_\paramv}$ is measured in the computational basis to obtain a bit string $\xv$ with probability $|\braket{\xv|\psi_\paramv}|^2$.

Various strategies have been proposed to choose parameters $(\paramv)$ to minimize the expected value of the cost function $\braket{\psi_\paramv|C|\psi_\paramv}$ upon measurement for any given problem instance (see e.g., \cite{ZhouQAOA}). In this work, however, we will show how to choose the parameters so that we obtain  significant probability of finding an exact solution $\xv^*=\arg\min_\xv C(\xv)$, which may not always be the parameters that minimize the expected cost.

\subsection{Symmetric CSPs}
\label{sec:symmetric-CSP}
We aim to find exact solutions to combinatorial optimization problems in the form of maximum constraint satisfaction problems (Max-CSPs) with a planted solution.
These problems are described by a cost function $C:\{0,1\}^n \to \Z$  counting the number (or total weight) of violated constraints, such that $C(\xv)$ is minimized by the planted bit string $\sv \in \{0,1\}^n$, and $C(\xv) \ge C(\sv)$ for all $\xv \neq \sv$. 

Furthermore, we will consider symmetric optimization problems where the cost function is invariant under some symmetry group action on the input modulo the solution string. 
\begin{defn}[Symmetric CSPs] Given a permutation group  $G\subseteq S_n$ acting on $[n]$ and a bit string $\sv\in \{0,1\}^n$, we say that a Max-CSP with cost function $C: \{0,1\}^n\to \Z$ is \emph{$G$-symmetric relative to $\sv$} if $C(\pi(\xv\oplus\sv) \oplus \sv) = C(\xv)$ for all $\xv\in \{0,1\}^n$ and $\pi \in G$.
Here, $\oplus$ refers to bitwise modulo-2 addition, and $[\pi(\xv)]_i = x_{\pi(i)}$.
\end{defn}

Furthermore, we say that a given Max-CSP is a Max-$\ell$-CSP (or an $\ell$-CSP, for short) when the cost function is a sum of $\ell$-local constraint clauses, i.e., $C(\xv)=\sum_\alpha C_\alpha(\xv)$ where each $C_\alpha$ is a function that depends nontrivially on $\ell$ or fewer bits.
For concreteness, we will focus on the following two examples of symmetric CSPs and explain how to generate them by symmetrizing local constraints.

\paragraph{Example 1: $S_n$-symmetric CSPs.}
In this case, $C(\xv) = c(|\xv\oplus\sv|)$ for some function $c:\{0,\dots,n\} \to \Z$, where $|\xv\oplus\sv|$ is the Hamming distance between $\xv$ and $\sv$. In what follows, we will often refer to both $c$ and $C$ as the cost function interchangeably, and use the upper-case $C$ for bit string inputs and the lower-case $c$ for the Hamming distance inputs.

We can generate an $S_n$-symmetric CSP relative to $\sv$ from any local ``clause'' function $C_1:\{0,1\}^\ell \to \Z$ by applying $C_1$ to all $\ell$-subsets of the bits in $\xv\oplus\sv$.  If $C_1$ is not symmetric with respect to its input, we will consider all permutations of these subsets too, which effectively symmetrizes $C_1$.
For example, if $C_1 = x_1 \vee x_2$ is a boolean satisfiability clause, we would get an overall Max-SAT formula of the form $(y_1 \vee y_2) \wedge (y_1 \vee y_3) \wedge \dots \wedge (y_{n-1} \vee y_n)$ where $y_i=x_i\oplus s_i$, with a cost function given by the number of unsatisfied clauses. The final formula will have $\Theta(n^\ell)$ clauses.

As an example, let us consider the following SAT formula (with $\sv=0^n$ for simplicity):
\begin{equation} \label{eq:example-3SAT}
\phi = \phi_1 \wedge \phi_2, \qquad \text{where} \quad
\phi_1 = \lnot x_1 \wedge \lnot x_2 \wedge \dots \wedge \lnot x_n,
\quad
\phi_2 = \bigwedge_{i\neq j \neq k} (\lnot x_i \vee x_j \vee x_k).
\end{equation}
Here, $\phi_1$ ensures that the only satisfying assignment is $\sv=0^n$. $\phi_2$ ensures that most clauses are ones which encourage a local search algorithm to increase the Hamming weight of the current attempt. The cost function counting the number of unsatisfied clauses is $C(\xv)=\sum_i x_i + \sum_{i\neq j\neq k}x_i(1-x_j)(1-x_k)$. More simply, the cost of a bit string $\xv$ with Hamming weight $k$ is
\begin{equation} \label{eq:example-3SAT-cost}
c(k) = k + k (n-k)(n-k-1).
\end{equation}
Note that this cost function has two local minima: the global minimum at $k=0$ where $c(k)=0$, and a suboptimal local minimum at $k=n-1$ where $c(k)=n-1$. Moreover, the cost decreases in the range $k/n \ge 1/3$, which causes many classical algorithms to be trapped in the false minimum at $k=n-1$. For example, an algorithm based on hill climbing to decrease the cost, which starts with a uniformly random assignment, will go in the wrong direction. In addition, for most assignments, most of the violated clauses are ones in $\phi_2$; a local search algorithm like \alg{WalkSAT} that iteratively take a violated clause at random and flip one of the bits will more likely flip the last two bits to 1 instead of the first bit to 0, moving it farther from the solution.

\paragraph{Example 2: $S_{n_1}\times S_{n_2}$-symmetric CSPs with $n_1+n_2=n$.}
In this case, we partition any $n$-bit string into two substrings as $\xv=(\xv^\G1, \xv^\G2)$ where each $\xv^\G{i}$ is an $n_i$-bit string.
Then $C(\xv) = c( |\xv^\G1\oplus \sv^\G1|, |\xv^\G2 \oplus\sv^\G2| )$ for some function $c:[0, n_1]\times [0, n_2] \to \Z$.

Similar to the above example, we can generate an $S_{n_1}\times S_{n_2}$-symmetric CSP by taking any $(\ell_1 + \ell_2)$-local clause function $C_2:\{0,1\}^{\ell_1 + \ell_2} \to \Z$ and apply them to all possible unions of $\ell_1$-subsets from $n_1$ bits and and $\ell_2$-subset from $n_2$ bits. For example, if $(\ell_1,\ell_2)=(3,2)$ and $C_2 = x_1 (1- x_2)(1- x_3) x_4 (1-x_5)$, then the corresponding symmetrized CSP is
\begin{equation}
C(\xv) = \sum_{\substack{i,j,k\in [n_1]\\i\neq j\neq k}}~ \sum_{\substack{l,m\in [n_2] \\ l\neq m}} x^\G1_i (1-x^\G1_j)(1-x^\G1_k) x^\G2_l(1-x^\G2_m)
\end{equation}
Note upon an input with substring Hamming weights $k_1 = |\xv^\G1|$ and $k_2 = |\xv^\G2|$, the above cost function can be written more simply as
\begin{equation}
c(k_1, k_2) = k_1(n_1-k_1)(n_1-k_1-1) k_2(n_2-k_2-1).
\end{equation}

\subsection{Related work}

The performance of the QAOA has been studied for various optimization problems, including the Sherrington-Kirkpatrick model \cite{farhi2022quantum}, MaxCut \cite{basso2021quantum, boulebnane2021predicting}, the Max-$q$-XORSAT for regular hypergraphs \cite{basso2021quantum}, $q$-spin spin-glass models \cite{claes2021instance,basso2022performance}, and random Boolean satisfiability problems \cite{boulebnane2024SAT}.
Among these work, only \cite{boulebnane2024SAT} studied the QAOA in the context of exact optimization where the goal is to find the globally optimal solution. The results in \cite{boulebnane2024SAT} show that low-depth QAOA can find solutions to random $k$-SAT with exponentially small probability, although with a better exponent than the state-of-the-art classical SAT solvers that were benchmarked, yielding a polynomial quantum speedup.
Although \cite{basso2021quantum} showed that the QAOA provably achieves an approximate MaxCut value on random regular graphs better than any assumption-free polynomial-time classical algorithm, comparison against the conjectural state-of-the-art classical algorithm based on message passing~\cite{AMS2021MaxCut} suggests that a quadratic quantum speedup is more likely~\cite{TQC22talk}. Hence, the question of exponential quantum speedup in optimization via the QAOA is still open.

Another line of work has aimed to prove computational hardness results for the QAOA and related quantum algorithms. \cite{farhi2020quantumWhole1,farhi2020quantumWhole2, chou2021limitations, chen2023local} studied the limitation of local quantum algorithms like the QAOA for solving combinatorial optimization problems on sparse random graphs, using the bounded light-cone of the algorithms at sufficiently low depths. This limitation was later translated to the dense spin-glass models in \cite{basso2022performance}. In addition, \cite{bravyi2020obstacles, anshu2023concentration} proved hardness results for the QAOA by exploiting the bit-flip symmetry in certain families of problem.
However, all these previously known limitations of the QAOA have relied on concentration of the measured bit strings in the Hamming weight basis in some manner.
We remark that none of the existing hardness results apply to the settings we consider in the current paper, since the problems involve highly nonlocal geometry and the QAOA operates in a regime where there is no concentration of measurement.

There have also been multiple studies of quantum algorithms on symmetric optimization problem.
Early results in \cite{farhi2002QAAvsSA, vandam2001FOCS, reichardt2004adiabatic} examined adiabatic quantum annealing (i.e., the quantum adiabatic algorithm) on symmetric optimization problems where the cost function depends only on the Hamming weight of the input bit string.
In particular, \cite{farhi2002QAAvsSA} showed that for some of these problems with two local minima, classical simulated annealing takes exponential time to find the global minimum, whereas adiabatic quantum annealing may take polynomial or exponential time depending on the shape of the potential barrier between the local minima.
Later \cite{muthukrishnan2016} studied quantum annealing on more examples of such ``perturbed Hamming weight optimization'' problems, and found that short-time (non-adiabatic) quantum annealing can be much faster than adiabatic quantum annealing.
Nevertheless, many of these examples have cost functions that are highly nonlocal and cannot be instantiated as a Max-CSP with bounded locality. In contrast, all examples of symmetric CSPs we study in this work are made out of local clauses that are closer to practical optimization problems.

We remark that \cite{farhi2014quantum} has previously claimed but not proven the fact that the 1-step QAOA can solve a specific family of symmetric CSPs (which was first studied in \cite{farhi2002QAAvsSA}) that would take quantum annealing exponential time. Here, we develop techniques that enable a proof of the claim rigorously, as well as considering more general problems and situations when the symmetry is broken.

\section{QAOA can solve symmetric CSPs in polynomial time}
In this section, we show that the QAOA can solve symmetric CSPs in polynomial time for the two example symmetries $S_n$ and $S_{n_1}\times S_{n_2}$ that we introduced in Section~\ref{sec:symmetric-CSP}.

For $S_n$-symmetric CSPs, we first show in Theorem~\ref{thm:distinct-values} that 1-step QAOA can find the exact solution with probability $\Omega(1/\sqrt{n})$ under relatively general assumptions for the cost function.  Then we show in Theorem~\ref{thm:unit-prob} that the success probability of QAOA can be improved to $\Omega(1)$ when the cost function takes on a specific form. Although more restricted, the symmetric CSPs in the second case contain features that are hard for classical algorithms, which we explain in more detail in Section~\ref{sec:classical}.

Finally, we consider a generalization of our result to the product symmetry group $S_{n_1}\times S_{n_2}$ in Section~\ref{sec:prod-sym}. There, under a suitable constraint for the form of the cost function, we show in Theorem~\ref{thm:prod-sym} that 1-step QAOA can also achieve $\Omega(1)$ success probability in finding the exact solution for these more general problems.

\subsection{$S_n$-symmetric CSPs}
\label{sec:Sn-sym}

We begin by proposing a couple somewhat general assumptions for $S_n$-symmetric CSPs under which the QAOA succeeds to find the solution with probability $\Omega(1/\sqrt{n})$.
This is the assumption that the cost function takes on distinct values for different input Hamming distances.
More formally,
\begin{assump} \label{as:distinct-values}
$c(k)\neq c(l)$ for all $k\neq l$.
\end{assump}
\begin{assump} \label{as:approximate-distinct-values}
$\Pr_{\xv,\yv}\Big[c(|\xv|) = c(|\yv|) ~\text{and}~ |\xv|\neq |\yv|\Big]  = o(1/\sqrt{n})$ for bit strings $\xv,\yv \in \{0,1\}^n$ drawn uniformly at random.
\end{assump}
We believe that Assumption~\ref{as:distinct-values} can be easily checked to be satisfied for many $S_n$-symmetric CSPs constructed from symmetrizing local constraints.
Intuitively, an $S_n$-symmetric Max-$\ell$-CSP constructed from symmetrizing a $\ell$-local clause has a cost function  that takes on values in the range of $[0,\Theta(n^\ell)]$, and so it's usually unlikely that the $n+1$ values repeat.
More rigorously, we show for example later in Lemma~\ref{lem:distinct-values} that this assumption is satisfied for the symmetric SAT problem  given in Eq.~\eqref{eq:example-3SAT} with a simple number-theoretic proof.
Assumption~\ref{as:approximate-distinct-values} is weaker (and thus more general) since it is implied by Assumption~\ref{as:distinct-values}, and it allows $c(k)$ to take on repeated values for the relatively rarer Hamming distances.
Under either assumptions, we prove:
\begin{thm} \label{thm:distinct-values}
For any family of $S_n$-symmetric CSPs satisfying Assumption~\ref{as:distinct-values} or \ref{as:approximate-distinct-values}, the 1-step QAOA finds the solution with probability $\Omega(1/\sqrt{n})$.
\end{thm}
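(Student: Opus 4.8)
The plan is to compute the amplitude $\braket{\sv|\psi_{\beta,\gamma}}$ of the planted solution $\sv$ explicitly for $p=1$ and show its squared magnitude is $\Omega(1/\sqrt{n})$ for a suitable choice of $(\beta,\gamma)$. Without loss of generality I would take $\sv = 0^n$, since the $S_n$-symmetry relative to $\sv$ together with the $X$-mixer's translation invariance means conjugating by the bit-flip unitary $X^{\sv}$ maps the general case to this one. The starting observation is that the $S_n$-symmetry forces the cost operator $C$ to act as $c(|\xv|)$, so the state $e^{i\gamma C}\ket{s}$ has amplitude $2^{-n/2} e^{i\gamma c(|\xv|)}$ on each $\xv$, depending only on Hamming weight. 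Expanding $\ket{\psi_{\beta,\gamma}} = e^{i\beta B} e^{i\gamma C}\ket{s}$ and using that $e^{i\beta B} = \prod_j e^{i\beta X_j}$ factorizes across qubits, the amplitude on $0^n$ becomes
\begin{equation}
\braket{0^n|\psi_{\beta,\gamma}} = 2^{-n} \sum_{\xv} e^{i\gamma c(|\xv|)} \prod_{j} \braket{0|e^{i\beta X_j}|x_j} = 2^{-n}\sum_{k=0}^n \binom{n}{k} (\cos\beta)^{n-k}(i\sin\beta)^{k}\, e^{i\gamma c(k)}.
\end{equation}
So the success amplitude is a single weight-indexed sum, and the goal reduces to lower-bounding $|A|^2$ where $A = 2^{-n}\sum_k \binom{n}{k}(\cos\beta)^{n-k}(i\sin\beta)^k e^{i\gamma c(k)}$.

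\textbf{Next I would analyze this sum by the saddle-point / stationary-phase method.} Choosing $\beta$ near $\pi/4$ makes $(\cos\beta)^{n-k}(\sin\beta)^k$ roughly uniform in magnitude, so the binomial factor $\binom{n}{k}2^{-n}$ dominates the envelope and concentrates the sum near $k \approx n/2$ with Gaussian width $\Theta(\sqrt{n})$. The key point is that \emph{without} the phase $e^{i\gamma c(k)}$, the sum $2^{-n}\sum_k \binom{n}{k}(i)^k$ at $\beta=\pi/4$ is exactly $(\tfrac{1+i}{2})^n$ times $2^{n/2}$-type normalization, giving a magnitude like $2^{-n/2}$ — exponentially small. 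The role of $\gamma$ is to insert a phase that, near the dominant weights $k\approx n/2$, approximately \emph{cancels} the rapidly oscillating $(i\sin\beta)^k$ factor, turning destructive interference into constructive interference. I would therefore pick $\gamma$ so that the phase of $e^{i\gamma c(k)}$ locally counteracts the $k\pi/2$ phase from $i^k$ around the peak, and then estimate the resulting sum by approximating the binomial by a Gaussian and $c(k)$ by its local linear (and quadratic) expansion near $k=n/2$, reducing $A$ to a Gaussian integral whose value is $\Theta(1/\sqrt{n})$ in magnitude.

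\textbf{The main obstacle} is making the stationary-phase estimate fully rigorous under the weak hypotheses: Assumption~\ref{as:distinct-values} and especially the averaged Assumption~\ref{as:approximate-distinct-values} say almost nothing about the \emph{smoothness} or local behavior of $c$, only about its injectivity (up to rare collisions). So I cannot simply Taylor-expand $c(k)$ — an arbitrary injective $c$ could oscillate wildly. The right way to see the $\Omega(1/\sqrt{n})$ bound is probably to avoid pointwise control of $c$ and instead argue via a second-moment / averaging identity: the total measurement probability distribution over Hamming weights is fixed, and by Parseval-type reasoning the collision probability controls how much probability can pile up off the planted string. Concretely, I expect the proof to relate $\sum_\xv |\braket{\xv|\psi}|^2 = 1$ to a sum over pairs and use Assumption~\ref{as:approximate-distinct-values} to show that the contribution from distinct-weight collisions is $o(1/\sqrt n)$, so that the distinct values of $c$ spread the amplitude across $\Theta(\sqrt n)$ effective weight classes near $k=n/2$, each receiving $\Theta(1/\sqrt n)$ of the probability; choosing $\gamma$ to align the phase on the planted string then extracts one such $\Theta(1/\sqrt n)$ contribution. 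Handling the two assumptions uniformly — the clean injective case versus the averaged case with rare repeats — and controlling the error terms in the saddle-point approximation to confirm they do not swamp the $\Theta(1/\sqrt n)$ main term is where the real work lies.
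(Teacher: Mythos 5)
Your setup is identical to the paper's (take $\sv=0^n$, $\beta=-\pi/4$, reduce the amplitude to $2^{-n}\sum_k \binom{n}{k} i^k e^{i\gamma c(k)}$), but the core of your argument has a genuine gap. Your primary plan — pick $\gamma$ so that $e^{i\gamma c(k)}$ locally cancels the $i^k$ oscillation near $k\approx n/2$ and run stationary phase — cannot work under these hypotheses, for exactly the reason you yourself flag: Assumptions~\ref{as:distinct-values} and \ref{as:approximate-distinct-values} only constrain collisions of the values $c(k)$, not smoothness; $c$ could be an arbitrary injection of $\{0,\dots,n\}$ into $\Z$ (e.g.\ a wild permutation of values), and then there is no single phase-aligning $\gamma$ and no Taylor expansion. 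That route is the one the paper uses for Theorem~\ref{thm:unit-prob}, which has the extra structural form \eqref{eq:symSATcost}; it is not the route for Theorem~\ref{thm:distinct-values}.

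Your fallback (``second-moment / averaging identity'') points in the right direction but never lands on the actual mechanism, and your closing sentence still speaks of ``choosing $\gamma$ to align the phase,'' which is not how the proof goes. The missing idea is to average the \emph{success probability} over $\gamma$ drawn uniformly from $[0,2\pi)$: since $c$ is integer-valued, $\E_\gamma[e^{i\gamma(c(k)-c(l))}]=\Id_{c(k)=c(l)}$, so all cross terms with $c(k)\neq c(l)$ vanish exactly, and the diagonal gives
\begin{equation*}
\E_\gamma\big[|\braket{\sv|\psi_\param}|^2\big]=\frac{1}{4^n}\sum_{k=0}^n\binom{n}{k}^2+\epsilon_n=\frac{1}{4^n}\binom{2n}{n}+\epsilon_n=\Theta(1/\sqrt{n})+\epsilon_n,
\end{equation*}
with $\epsilon_n=0$ under Assumption~\ref{as:distinct-values} and $|\epsilon_n|\le\Pr_{\xv,\yv}[c(|\xv|)=c(|\yv|)\wedge|\xv|\neq|\yv|]=o(1/\sqrt{n})$ under Assumption~\ref{as:approximate-distinct-values}. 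A lower bound on the expectation immediately yields a $\gamma$ (indeed a typical one) achieving $\Omega(1/\sqrt{n})$. Note also that the integrality of $c$ is essential to kill the off-diagonal terms — a point your sketch never uses. Without this averaging step your argument does not close.
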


We remark that distinctness of the $c(k)$'s is required to some extent for the quantum algorithm to succeed in polynomial time. This is because in the extreme case when $c(0)=0$, $c(k) = 1$ for $k \neq 0$, we are in the situation of unstructured search, and the running time must be $\Omega(\sqrt{2^n})$ due to the lower bound by \cite{BBBV}.

Next, we consider the case of $S_n$-symmetric cost functions that take on a specific form and show that 1-step QAOA can succeed to find the solution with $\Omega(1)$ probability.
For any integers $\ell,a$ with $0\le a \le \ell$, we consider any cost function that looks like the following to leading order in $n$:
\begin{equation} \label{eq:symSATcost}
	c(k) = n^{\ell} \Big(\frac{k}{n}\Big)^a \Big(1-\frac{k}{n}\Big)^{\ell-a} + O(n^{\ell-1}).
\end{equation}
Note the example problem given earlier with cost function \eqref{eq:example-3SAT-cost} falls under this case. Furthermore, we expect such cost functions can be challenging for classical algorithms since they contain two distant local minima at $k=0$ and $k=n$.
For these problems, we show in the below theorem that there are parameters for 1-step QAOA to find the exact solution with $\Omega(1)$ probability.

\begin{thm}\label{thm:unit-prob}
For any $S_n$-symmetric problem with cost function of the form \eqref{eq:symSATcost} where $\ell\neq 2a$, the 1-step QAOA with $\gamma = \frac{2^{\ell-2}}{\ell-2a} \frac{\pi}{n^{\ell-1}}$ and $\beta=-\pi/4$ can achieve $\Omega(1)$ probability of finding the exact solution.
\end{thm}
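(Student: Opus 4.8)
The plan is to compute the amplitude on the solution string exactly and then extract its modulus by a rigorous saddle-point (stationary-phase) analysis of a binomial sum. Since the cost function is $S_n$-symmetric relative to $\sv$, I may assume $\sv=0^n$, so that $C(\xv)=c(|\xv|)$ and I want $|\braket{0^n|\psi_{\beta,\gamma}}|^2=\Omega(1)$. Expanding the $p=1$ state from \eqref{eq:wavefunction}, the mixer $e^{i\beta B}=\bigotimes_j e^{i\beta X_j}$ factorizes over qubits; using $\braket{0|e^{i\beta X}|0}=\cos\beta$ and $\braket{0|e^{i\beta X}|1}=i\sin\beta$ and grouping the $\binom{n}{k}$ strings of each Hamming weight $k$ together, the amplitude collapses to the one-dimensional sum
\begin{equation}
\braket{0^n|\psi_{\beta,\gamma}} = 2^{-n}\sum_{k=0}^{n}\binom{n}{k}(\cos\beta)^{n-k}(i\sin\beta)^{k}\,e^{i\gamma c(k)}.
\end{equation}
At $\beta=-\pi/4$ the prefactors become $2^{-n/2}(-i)^k$, so the success amplitude equals $A=\E_{k\sim\Binom(n,1/2)}\!\big[e^{i\theta_k}\big]$ with total phase $\theta_k=\gamma c(k)-\tfrac{\pi}{2}k$. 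Everything now reduces to showing this expectation has $\Omega(1)$ modulus.

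The key observation is that the binomial weight $2^{-n}\binom{n}{k}$ concentrates on $|k-\tfrac n2|=O(\sqrt n)$, so I only need to understand $\theta_k$ on that window. Writing $k=\tfrac n2+m$ and Taylor-expanding, the parameter $\gamma$ is tuned precisely so that the linear part of $\theta_k$ at the center vanishes: using \eqref{eq:symSATcost} one computes $c'(n/2)=n^{\ell-1}2^{1-\ell}(2a-\ell)+O(n^{\ell-2})$, whence $\gamma\,c'(n/2)$ exactly cancels the $\tfrac{\pi}{2}$-per-step phase of the mixer, giving $\theta'(n/2)=0$. This is exactly where the hypothesis $\ell\neq 2a$ is used: if $\ell=2a$ then $c'(n/2)$ vanishes to leading order and $\gamma$ is undefined, so the center cannot be made stationary. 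The residual quadratic term is $\tfrac12\gamma c''(n/2)\,m^2$, and since $c''(n/2)=\Theta(n^{\ell-2})$ while $\gamma=\Theta(n^{-(\ell-1)})$, its coefficient is $\Theta(1/n)$; over the width-$\sqrt n$ envelope this is an $\Theta(1)$ phase — the Fresnel regime, where neither full cancellation nor full coherence occurs.

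With this in hand, I would substitute the local normal approximation $2^{-n}\binom{n}{n/2+m}\approx\sqrt{2/(\pi n)}\,e^{-2m^2/n}$ and replace the sum over $m$ by a Gaussian–Fresnel integral. Setting $\mu=\tfrac12\gamma c''(n/2)$, the integral $\int e^{-(2/n-i\mu)m^2}\,dm$ evaluates in closed form, yielding
\begin{equation}
|A|^2 \;\longrightarrow\; \frac{2}{\sqrt{4+(\mu n)^2}},\qquad \mu n=\frac{\pi}{2}\Big[(\ell-2a)-\frac{\ell}{\ell-2a}\Big],
\end{equation}
a strictly positive constant depending only on $\ell$ and $a$ (for the running example $\ell=3,a=1$ this gives $2/\sqrt{4+\pi^2}\approx 0.54$). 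Thus the probability is $\Omega(1)$, and the precise tuning of $\gamma$ is what converts the otherwise exponentially small overlap into a constant.

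The main obstacle is making the saddle-point step fully rigorous rather than heuristic. I would split the sum into a central window $|m|\le n^{1/2+\epsilon}$ and its complement, bounding the complement by the exponential tail decay of $\binom{n}{k}$ away from $n/2$. Inside the window I must control three approximation errors uniformly: the replacement of $\binom{n}{k}$ by its Gaussian form (a Stirling/Euler–Maclaurin estimate), the truncation of the phase Taylor series after the quadratic term (the cubic term is $\tfrac16\gamma c'''(n/2)m^3=O(m^3/n^2)=O(n^{-1/2})$ on the window, hence negligible), and the passage from the lattice sum to the integral (a Riemann-sum error, negligible because the summand varies slowly on the scale $\sqrt n\gg1$). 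I would also verify that the $O(n^{\ell-1})$ correction term in \eqref{eq:symSATcost} does not spoil the stationarity: its contribution to $\theta'(n/2)$ is $\gamma\cdot O(n^{\ell-2})=O(1/n)$, so it shifts the stationary point only by $o(\sqrt n)$ and changes $|A|^2$ by $o(1)$. Assembling these bounds shows $|A|^2=\tfrac{2}{\sqrt{4+(\mu n)^2}}+o(1)=\Omega(1)$, completing the proof.
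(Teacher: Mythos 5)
Your proposal is correct and follows essentially the same route as the paper: both reduce the overlap to the binomial sum $\frac{1}{2^n}\sum_k (\pm i)^k e^{i\gamma c(k)}\binom{n}{k}$, tune $\gamma$ so that $k=n/2$ is a stationary point of the total phase, and evaluate the resulting Gaussian--Fresnel (saddle-point) integral, arriving at the same limiting value $\big(1+\tfrac{\pi^2[(\ell-2a)^2-\ell]^2}{16(\ell-2a)^2}\big)^{-1/2}$. The only discrepancy is a sign-convention slip (with $\beta=-\pi/4$ the mixer contributes $(-i)^k$, so stationarity at $k=n/2$ actually requires flipping the sign of $\gamma$ or of $\beta$), but the paper's own derivation contains the identical inconsistency and it does not affect the $\Omega(1)$ conclusion.
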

In what follows, we prove Theorem~\ref{thm:distinct-values} in Section~\ref{sec:proof-distinct-values}, and then Theorem~\ref{thm:unit-prob} in Section~\ref{sec:proof-unit-prob}.

\subsubsection{Success when cost function takes on (mostly) distinct values}
\label{sec:proof-distinct-values}

To prove Theorem \ref{thm:distinct-values} under either Assumption~\ref{as:distinct-values} or \ref{as:approximate-distinct-values}, we begin by noting that the $1$-step QAOA produces the following state for parameters $\beta,\gamma \in \R$:
\[ \ket{\psi_\param} = e^{i\beta \sum_j X_j} e^{i\gamma C} \ket{+}^{\otimes n}.
\]
Observe that
\begin{eqnarray*}
	\braket{\sv|\psi_\param} &=& \bra{0^n} \prod_{j,s_j=1} X_j e^{i\beta \sum_j X_j} e^{i\gamma C} \ket{+}^{\otimes n}= \frac{1}{\sqrt{2^n}}\bra{0^n} e^{i\beta \sum_j X_j} \sum_{\xv \in \{0,1\}^n} e^{i \gamma c(|\xv\oplus\sv|)} \ket{\xv\oplus \sv}\\
	&=& \frac{1}{\sqrt{2^n}}\bra{0^n} e^{i\beta \sum_j X_j} \sum_{\xv \in \{0,1\}^n} e^{i \gamma c(|\xv|)} \ket{\xv}
\end{eqnarray*}
so without loss of generality we can assume that $\sv=0^n$. Then let us fix $\beta=-\pi/4$ and compute
\begin{eqnarray}
\braket{\sv|\psi_\param} &=& \frac{1}{2^n} (\bra{0}+i\bra{1})^{\otimes n} \sum_{\xv \in \{0,1\}^n} e^{i \gamma c(|\xv|)} \ket{\xv} = \frac{1}{2^n} \sum_{\xv \in \{0,1\}^n} i^{|\xv|} e^{i \gamma c(|\xv|)} \nonumber\\
	&=& \frac{1}{2^n} \sum_{k=0}^n i^{k} e^{i \gamma c(k)} \binom{n}{k}.
 \label{eq:overlap-general}
\end{eqnarray}

\begin{proof}[Proof of Theorem~\ref{thm:distinct-values}]
Let us choose $\gamma$ uniformly at random from $[0,2\pi)$ and compute the expected success probability. Starting from Eq.~\eqref{eq:overlap-general}, we get
\begin{eqnarray}
\E_\gamma[|\braket{\sv|\psi_\param}|^2] &=& \frac{1}{2^{2n}} \sum_{k,l=0}^n i^{k-l} \E_\gamma[e^{i \gamma (c(k)-c(l))}] \binom{n}{k} \binom{n}{l} = \frac{1}{2^{2n}} \sum_{k=0}^n \binom{n}{k}^2  + \epsilon_n =  \frac{1}{2^{2n}} \binom{2n}{n} + \epsilon_n \nonumber\\
	&= & \Theta(1/\sqrt{n}) + \epsilon_n
\end{eqnarray}
where we used Stirling's approximation for the first term. The error term collects the remaining part of the initial double sum:
\begin{align*}
\epsilon_n := \frac{1}{2^{2n}} \sum_{k\neq l} \binom{n}{k} \binom{n}{l}  i^{k-l}\E_\gamma[e^{i \gamma (c(k)-c(l))}] = \frac{1}{2^{2n}} \sum_{k\neq l} \binom{n}{k} \binom{n}{l}  i^{k-l}\Id_{c(k)=c(l)},
\end{align*}
where we used the fact that $\E_\gamma[e^{i \gamma z}] = 0$ for any integer $z \neq 0$, and denoted $\Id_A$ as the indicator function that returns 1 if statement $A$ is true and 0 otherwise.
For cost functions satisfying Assumption~\ref{as:distinct-values}, we have $\epsilon_n=0$, which immediately proves the theorem for this case.
For the more general case under Assumption~\ref{as:approximate-distinct-values}, let us rewrite the error term as
\begin{align}
\epsilon_n &= \frac{1}{2^{2n}} \sum_{\xv,\yv\in\{0,1\}^n}   i^{|\xv|-|\yv|} \Id_{c(|\xv|)=c(|\yv|) \,\wedge\, |\xv|\neq |\yv|} \nonumber\\
|\epsilon_n| &\le \frac{1}{2^{2n}} \sum_{\xv,\yv\in\{0,1\}^n}  \Id_{c(|\xv|)=c(|\yv|) \,\wedge\, |\xv|\neq |\yv|} = \Pr_{\xv,\yv}\Big[c(|\xv|)=c(|\yv|) \,\wedge\, |\xv|\neq |\yv|\Big].
\end{align}
Assumption~\ref{as:approximate-distinct-values} implies $|\epsilon_n|=o(1/\sqrt{n})$, which implies $\E_\gamma[|\braket{\sv|\psi_\param}|^2]=\Theta(1/\sqrt{n})$.
Therefore, under either assumptions, QAOA outputs the correct answer $\sv$ with probability $\Theta(1/\sqrt{n})$, and $O(\sqrt{n})$ repetitions are sufficient to identify $\sv$.
\end{proof}

To illustrate the applicability of Theorem~\ref{thm:distinct-values}, we show in the below lemma that the example $S_n$-symmetric SAT problem given in Section~\ref{sec:symmetric-CSP} indeed satisfies the distinct-value assumption, and hence the theorem can be applied to rigorously prove the success of 1-step QAOA.

\begin{lem} \label{lem:distinct-values}
Consider the SAT problem defined in Eq.~\eqref{eq:example-3SAT} with cost function $c(k)$ in Eq.~\eqref{eq:example-3SAT-cost}.
Then for all $n$ satisfying $n\equiv 0$ or $1\mod 4$, Assumption~\ref{as:distinct-values} is satisfied.
\end{lem}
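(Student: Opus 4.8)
The plan is to prove the stronger statement that the cubic $c(k)=k+k(n-k)(n-k-1)$ of Eq.~\eqref{eq:example-3SAT-cost} is injective on $\{0,1,\dots,n\}$ whenever $n\equiv 0$ or $1\pmod 4$; this is precisely Assumption~\ref{as:distinct-values}. Expanding gives $c(k)=k^3+(1-2n)k^2+(n^2-n+1)k$, so for any $k\ne l$ one factors $c(k)-c(l)=(k-l)\,Q(k,l)$ with $Q(k,l)=k^2+kl+l^2+(1-2n)(k+l)+(n^2-n+1)$. Hence a collision $c(k)=c(l)$ with $k\ne l$ is equivalent to $Q(k,l)=0$, and the whole task reduces to showing $Q$ has no integer zeros with $k\ne l$.

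The next step is to turn $Q=0$ into a single binary quadratic form. Writing $S=k+l$ and $d=k-l$ and using the identity $k^2+kl+l^2=(3S^2+d^2)/4$, the equation $Q=0$ becomes $3S^2+d^2+4(1-2n)S+4(n^2-n+1)=0$; multiplying by $3$ and completing the square in $S$ yields the clean relation $(3S-4n+2)^2+3d^2=4(n-2)(n+1)$. Setting $W=3(k+l)-4n+2$, the existence of a collision is thus equivalent to an integer representation $W^2+3d^2=4(n-2)(n+1)$ with $d=k-l\ne 0$.

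The finish is a short congruence argument. Since $n-2$ and $n+1$ differ by $3$ they have opposite parity, so $(n-2)(n+1)$ is always even and $4(n-2)(n+1)\equiv 0\pmod 8$. If $W$ and $d$ were both odd then $W^2+3d^2\equiv 1+3\equiv 4\pmod 8$, a contradiction; since $W\equiv d\pmod 2$ holds automatically, both must be even. Writing $W=2W'$ and $d=2d'$ reduces the equation to $W'^2+3d'^2=(n-2)(n+1)$. For $n\equiv 0$ or $1\pmod 4$ one checks directly that $(n-2)(n+1)\equiv 2\pmod 4$, whereas $W'^2+3d'^2 \bmod 4$ only ever takes the values $0,1,3$. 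This contradiction rules out every collision, establishing injectivity and hence Assumption~\ref{as:distinct-values}.

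I expect the only genuine work to be the algebraic reduction of $Q=0$ to the form $W^2+3d^2=4(n-2)(n+1)$, where the bookkeeping in completing the square is the most likely place for a sign slip; once that identity is in hand, the parity step forcing $W,d$ even and the final $\bmod 4$ obstruction are routine. It is worth sanity-checking the boundary against $n=6\equiv 2\pmod 4$, where $c(1)=c(3)=21$ and indeed $(n-2)(n+1)=28\equiv 0\pmod 4$, so the obstruction correctly disappears exactly when the hypothesis fails.
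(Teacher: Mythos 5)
Your proof is correct, and it follows the same overall strategy as the paper's: divide $c(k)-c(l)$ by $k-l$ to reduce a collision to an integer zero of the quadratic $Q(k,l)$ (equivalent to the paper's Eq.~\eqref{eq:diophantine-simple} after the substitution $\delta=m-n$), then kill that Diophantine equation with a congruence obstruction modulo $4$. Where you differ is in how the obstruction is established. The paper exhaustively checks all $16$ residue pairs $(k,\delta)\bmod 4$ and observes that none satisfies the congruence when $n\equiv 0,1\pmod 4$. You instead complete the square to rewrite $Q=0$ as the norm-form identity $\bigl(3(k+l)-4n+2\bigr)^2+3(k-l)^2=4(n-2)(n+1)$, force both variables to be even via a mod-$8$ parity step, and finish with the fact that $x^2+3y^2\not\equiv 2\pmod 4$. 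I checked the algebra: $4(1-2n)^2-12(n^2-n+1)=4(n-2)(n+1)$, the parity claim $W\equiv d\pmod 2$, and all the stated residues are right. What your version buys is a structural explanation of the hypothesis: the obstruction disappears exactly when $(n-2)(n+1)\equiv 0\pmod 4$, i.e.\ $n\equiv 2,3\pmod 4$, consistent with the genuine collision $c(1)=c(3)=21$ at $n=6$; the paper's brute-force residue check is shorter but gives no such insight. Both are complete proofs of the lemma.
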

\begin{proof}
	Suppose for the sake of contradiction that $c(k)=c(m)$ for some $k\neq m$. Then
	\begin{equation}
		0 = \frac{c(m)-c(k)}{m-k} = k^2 + k(m+1-2n) + (m-n)^2 + m-n + 1.
	\end{equation}
	To simplify this expression, let $\delta = m-n$, which allows us to write
	\begin{equation} \label{eq:diophantine-simple}
		0 =	k^2 + k(\delta+1-n) + \delta^2+\delta+1.
	\end{equation}
	If this equation has an integer solution, that it must have a solution in mod 4 as well, yielding
	\begin{equation*}
		0 =	k^2 + k(\delta+1-n) + \delta^2+\delta+1 \mod 4.
	\end{equation*}
	It is straightforward to verify that when $n=0$ or $1\mod4$, the above equality is not satisfiable for all $4^2$ possible of choices of $(k,\delta) \mod 4$. Thus, no integer solution to Eq.~\eqref{eq:diophantine-simple} exists, implying there is no integer solution to $c(m)=c(k)$ where $k\neq m$.
\end{proof}

\subsubsection{Success in some cases with optimized $\gamma$}
\label{sec:proof-unit-prob}

We now prove Theorem~\ref{thm:unit-prob} to show that 1-step QAOA can succeed with $\Omega(1)$ probability on some families of $S_n$-symmetric CSPs.

We will again set $\beta=-\pi/4$ and consider the parameter regime of $\gamma$ where 
\begin{align*}
	\gamma = \frac{\Gamma}{n^{\ell-1}} \pi.
\end{align*}
Following Eq.~\eqref{eq:overlap-general} and plugging in the form of $c(k)$ assumed in Eq.~\eqref{eq:symSATcost}, we have
\begin{equation}
\braket{\sv|\psi_\param} = \frac{1}{2^n} \sum_{k=0}^n \binom{n}{k} \exp\Big[{in \frac{\pi}{2}  P_{n}(k/n)}\Big] \label{eq:amp-sum}
\end{equation}
where
\[
P_n(\xi) = \xi + 2\Gamma \xi^a  (1-\xi)^{\ell-a} + O\Big(\frac{1}{n}\Big).
\]
Let us now rewrite the sum in \eqref{eq:amp-sum} as an integral. By Stirling's approximation, we have
\[
\binom{n}{k} = \sqrt{\frac{n}{2\pi k(n-k)}} \frac{n^n}{k^k(n-k)^{n-k}}  \times (1+o(1)),
\]
where  $o(1)$ is an error that vanishes with $n\to\infty$.
Let
\begin{equation}
S(\xi)	= -\xi\log \xi - (1-\xi)\log(1-\xi)  - \log 2 +  i\frac{\pi}{2} \Big[\xi + 2\Gamma \xi^a  (1-\xi)^{\ell-a}\Big],
\end{equation}
then
\begin{equation}
    \braket{\sv|\psi_\param} = \sum_{k=0}^n  \frac{1}{\sqrt{2\pi k(1-k/n)}} e^{n S(k/n)+ i\epsilon_n(k/n)} \times (1+o(1)) \label{eq:Riemann-sum}
\end{equation}
where $\epsilon_n(\xi)$ contains the left-over terms in $P_n(\xi)$ that remains uniformly bounded as $n\to\infty$.
We next apply the Euler-Maclaurin formula to approximate Eq.~\eqref{eq:Riemann-sum} with an integral while neglecting the exponentially small boundary terms as additional errors (which will be justified when we soon show that the final result is order unity), yielding
\begin{align}
\braket{\sv|\psi_\param} &= \int_0^n dk \frac{1}{\sqrt{2\pi k(1-k/n)}} e^{n S(k/n)+ i\epsilon_n(k/n)} \times (1+o(1))  \nonumber \\
&=
\int_0^1 d\xi \frac{\sqrt{n}}{\sqrt{2\pi \xi(1-\xi)}} e^{n S(\xi)+ i\epsilon_n(\xi)} \times (1+o(1)),
\label{eq:integral}
\end{align}
where we changed integration variable to $\xi=k/n$.

We note that  $\Re[S(\xi)]$ has a single maximum over $\xi\in(0,1)$ at $\xi_0=1/2$. We want to apply the saddle-point method to evaluate Eq.~\eqref{eq:integral} which requires $\xi_0$ to be a non-degenerate saddle point of $S(\xi)$. This requirement translates to the conditions that
\begin{equation}
0 = S'(\xi_0)= i\frac{\pi}{2} \Big(1 + \frac{2a-\ell}{2^{\ell-2}}\Gamma\Big),
\quad \text{and}\quad
0 \neq S''(\xi_0) = -4 + i\pi \Gamma \frac{(\ell-2a)^2 - \ell}{2^{\ell-2}}.
\end{equation}
These conditions can be satisfied when $\ell\neq 2a$ by choosing
\begin{equation*}
	\Gamma = \frac{2^{\ell-2}}{\ell-2a},
\end{equation*}
which ensures that $\xi_0$ is a unique non-degenerate saddle point of $S(\xi)$. Then the integral in Eq.~\eqref{eq:integral} satisfies the prerequisites for applying the saddle-point method (see e.g., \cite{PinnaViola2019}), yielding
\begin{align}
\braket{\sv|\psi_\param} &=  \frac{\sqrt{n}}{\sqrt{2\pi /4}} \sqrt{-\frac{2\pi}{n S''(\xi_0)}} e^{n S(\xi_0) + i\epsilon_n(\xi_0)} \times (1+o(1)), \nonumber \\
\lim_{n\to\infty} \left|\braket{\sv|\psi_\param}\right|^2 &=  \left| \frac{4}{4-i\pi \frac{(\ell-2a)^2-\ell}{\ell-2}}  \right| = \bigg(1+\pi^2\frac{[(\ell-2a)^2-\ell]^2}{16(\ell-2)^2} \bigg)^{-1/2}.
\end{align}
Note we used the fact that $\Re[S(\xi_0)]=0$ and $\epsilon_n(\xi_0) \in \R$.

\subsection{$S_{n_1}\times S_{n_2}$-symmetric CSPs}
\label{sec:prod-sym}

In this section, we study families of $S_{n_1}\times S_{n_2}$-symmetric CSPs in the large problem size limit. Specifically, we will let $\alpha_i = n_i/n$ and consider the limit where $n\to\infty$ with $\alpha_i$ as fixed constants.

For any $S_{n_1}\times S_{n_2}$-symmetric CSPs, the cost function can be written as 
\begin{align}
C(\xv) = c(d_1(\xv,\sv), d_2(\xv,\sv))
\end{align}
where $d_j(\xv, \sv)= |\xv^\G{j} \oplus \sv^\G{j} |$ is the Hamming distance between $\xv$ and $\sv$ among the subset of $n_j$ bits.

Similar to the setting of Theorem~\ref{thm:unit-prob}, we will consider situations where the cost function take on a restricted form where more explicit calculations can be done.
\begin{assump} \label{as:prod-sym-form}
Consider a family of $S_{n_1}\times S_{n_2}$-symmetric CSPs in the limit of $n\to\infty$ with $\alpha_i = n_i/n >0$ fixed.
Assume there is a positive integer $\ell$, a finite sequence of non-negative integer 4-tuples $(a_\mu,b_\mu,c_\mu, d_\mu)$ satisfying $a_\mu+b_\mu+c_\mu+d_\mu = \ell$ for all $\mu$, and a sequence of $\kappa_\mu\in \R$, such that the cost function of the CSP can be written to leading order in $n$ as
\begin{gather}
c(k_1, k_2) = n^\ell  f(k_1/n_1, k_2/n_2) + O(n^{\ell-1}),
\nonumber \\
\text{where} \quad
f(\xi_1, \xi_2) = \sum_{\mu} \kappa_\mu \xi_1^{a_\mu} (1-\xi_1)^{b_\mu} \xi_2^{c_\mu} (1-\xi_2)^{d_\mu}.
\label{eq:prod-sym-cost}
\end{gather}
Furthermore, we assume that
\begin{gather}
\frac{1}{\alpha_1} \sum_\mu \kappa_\mu (a_\mu - b_\mu) = \frac{1}{\alpha_2} \sum_\mu \kappa_\mu (c_\mu - d_\mu) \neq 0,
\label{eq:grad-cond}
\\
\text{and} \quad 
\det
\begin{pmatrix}
4\alpha_1 + i\pi\alpha_1 \frac{\sum_\mu\kappa_\mu [(a_\mu-b_\mu)^2 - (a_\mu+b_\mu)]}{\sum_\mu \kappa_\mu(a_\mu - b_\mu)}  & i\pi\alpha_1 \frac{\sum_\mu \kappa_\mu (a_\mu - b_\mu)(c_\mu - d_\mu)}{\sum_\mu \kappa_\mu(a_\mu-b_\mu)}  \\
i\pi\alpha_1 \frac{\sum_\mu \kappa_\mu (a_\mu - b_\mu)(c_\mu - d_\mu)}{\sum_\mu \kappa_\mu(a_\mu-b_\mu)} & 4\alpha_2 + i\pi\alpha_2 \frac{\sum_\mu \kappa_\mu [(c_\mu-d_\mu)^2 - (c_\mu+d_\mu)]}{\sum_\mu \kappa_\mu(c_\mu - d_\mu)}
\end{pmatrix} \neq 0.
\label{eq:hessian-cond}
\end{gather}
\end{assump}
We remark that one method to ensure that conditions \eqref{eq:grad-cond} and \eqref{eq:hessian-cond} are satisfied, for example, is by choosing $\alpha_1=\alpha_2=1/2$, having at least one term where $a_\mu\neq b_\mu$, and then symmetrizing the cost function over the two subsets to get $c'(\xi_1,\xi_2)= c(\xi_1,\xi_2) + c(\xi_2,\xi_1)$.

\begin{thm}\label{thm:prod-sym}
Consider any $S_{n_1}\times S_{n_2}$-symmetric CSP satisfying Assumption~\ref{as:prod-sym-form}.
In the limit of $n\to\infty$ with $n_i/n=\alpha_i$ held fixed,  the 1-step QAOA can achieve $\Omega(1)$ probability of finding the exact solution with $\gamma=\Theta(1/n^{\ell-1})$.
\end{thm}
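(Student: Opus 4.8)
The plan is to follow the same blueprint as the proof of Theorem~\ref{thm:unit-prob}, now carried out in two dimensions. As there, I set $\beta=-\pi/4$ and write $\gamma = \frac{\Gamma}{n^{\ell-1}}\pi$ for a constant $\Gamma$ to be fixed, so that $\gamma=\Theta(1/n^{\ell-1})$. Reducing to $\sv=0^n$ exactly as in the derivation preceding \eqref{eq:overlap-general}, and using that $i^{|\xv|}=i^{k_1}i^{k_2}$ when $\xv$ has subword weights $k_1,k_2$, the amplitude factorizes at the mixer layer into
\begin{equation*}
\braket{\sv|\psi_\param} = \frac{1}{2^n}\sum_{k_1=0}^{n_1}\sum_{k_2=0}^{n_2}\binom{n_1}{k_1}\binom{n_2}{k_2}\, i^{k_1+k_2}\,\exp\!\big[i\gamma\, c(k_1,k_2)\big].
\end{equation*}
Inserting $c(k_1,k_2)=n^\ell f(\xi_1,\xi_2)+O(n^{\ell-1})$ from \eqref{eq:prod-sym-cost}, with $\xi_j=k_j/n_j$, yields a phase $\pi\Gamma n f(\xi_1,\xi_2)+O(1)$.

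Next I would apply Stirling to both binomials and collect all order-$n$ terms into a single two-dimensional action
\begin{equation*}
S(\xi_1,\xi_2) = \sum_{j=1,2}\alpha_j\Big[-\xi_j\log\xi_j-(1-\xi_j)\log(1-\xi_j)-\log 2 + i\tfrac{\pi}{2}\xi_j\Big] + i\pi\Gamma f(\xi_1,\xi_2),
\end{equation*}
turning the double sum into a two-dimensional Riemann sum with Stirling prefactor $\big[2\pi n_j\xi_j(1-\xi_j)\big]^{-1/2}$ in each coordinate. The Euler--Maclaurin formula then converts this into $\frac{\sqrt{n_1 n_2}}{2\pi}\int_{[0,1]^2}\frac{e^{nS}}{\sqrt{\xi_1(1-\xi_1)\xi_2(1-\xi_2)}}\,d\xi_1\,d\xi_2$, up to a $(1+o(1))$ factor and exponentially small boundary corrections, with the subleading $O(n^{\ell-1})$ and $O(1/n)$ pieces absorbed into a uniformly bounded phase $\epsilon_n(\xi_1,\xi_2)$ exactly as in \eqref{eq:Riemann-sum}.

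The heart of the argument is a two-dimensional saddle-point analysis. Since the binary entropy is strictly concave with maximum at $1/2$, $\Re[S]$ attains its unique maximum over $[0,1]^2$ at $(\xi_1,\xi_2)=(1/2,1/2)$, where $\Re[S]=0$ and $\Re[S]<0$ elsewhere; this both justifies discarding the boundary terms and localizes the integral at this point. I then compute $\nabla S$ and the Hessian there. Using $\partial_{\xi_1}f|_{(1/2,1/2)}=2^{1-\ell}\sum_\mu\kappa_\mu(a_\mu-b_\mu)$ and the analogous $\xi_2$ derivative, the stationarity condition $\nabla S(1/2,1/2)=0$ reduces precisely to the consistency requirement \eqref{eq:grad-cond} together with the determination $\Gamma=-\alpha_1 2^{\ell-2}/\sum_\mu\kappa_\mu(a_\mu-b_\mu)=\Theta(1)$, the nonvanishing in \eqref{eq:grad-cond} guaranteeing $\Gamma$ is finite and nonzero. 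A parallel second-derivative computation---each $\partial_{\xi_j}^2$ of the entropy contributing $-4\alpha_j$, and the pure and mixed second derivatives of $f$ supplying purely imaginary parts---shows that the Hessian of $S$ coincides, up to an overall sign, with the matrix $M$ in \eqref{eq:hessian-cond}, so its non-degeneracy is exactly condition \eqref{eq:hessian-cond}.

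Finally I apply the multidimensional Laplace/saddle-point method. The crucial structural fact is that $\Re[M]=\operatorname{diag}(4\alpha_1,4\alpha_2)$---the phase $f$ contributes only to the imaginary part---so the resulting Gaussian is genuinely convergent and no contour deformation is needed. Evaluating the prefactor $1/\sqrt{\xi_1(1-\xi_1)\xi_2(1-\xi_2)}=4$ at the saddle and performing the Gaussian integral gives
\begin{equation*}
\lim_{n\to\infty}\big|\braket{\sv|\psi_\param}\big|^2 = \frac{16\,\alpha_1\alpha_2}{\big|\det M\big|},
\end{equation*}
which is a positive constant since $M$ is a fixed nonsingular matrix and $\Re[S(1/2,1/2)]=0$ makes the exponential factor have unit modulus, yielding the claimed $\Omega(1)$ success probability. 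I expect the main obstacle to be the rigorous control of the saddle-point approximation rather than the (routine but lengthy) derivative bookkeeping: one must verify uniform exponential suppression away from $(1/2,1/2)$, handle the integrable singularities of the prefactor at the edges $\xi_j\in\{0,1\}$, and confirm that the bounded phase error $\epsilon_n$ and the $(1+o(1))$ Stirling corrections do not disturb the leading order---all paralleling, in two dimensions, the one-dimensional treatment in the proof of Theorem~\ref{thm:unit-prob}.
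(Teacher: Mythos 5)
Your proposal is correct and follows essentially the same route as the paper's proof: the same reduction to a two-dimensional binomial sum, the same action $S$ and saddle at $(1/2,1/2)$, the same identification of conditions \eqref{eq:grad-cond} and \eqref{eq:hessian-cond} with stationarity and non-degeneracy (including the choice $\Gamma=-2^{\ell-2}\alpha_1/\sum_\mu\kappa_\mu(a_\mu-b_\mu)$), and the same final value $16\alpha_1\alpha_2/|\det H_S(\xiv^*)|$. Your observation that the real part of the Hessian is $\operatorname{diag}(4\alpha_1,4\alpha_2)$, together with the explicit list of remaining technical checks, is a slightly more careful justification of the saddle-point step than the paper gives, but it is the same argument.
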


\begin{proof}
We again set $\beta=-\pi/4$ and consider the parameter regime of $\gamma$ where
\begin{equation}
\gamma = \frac{\Gamma}{n^{\ell-1}} \pi.
\end{equation} Similar to the $S_n$-symmetric case, we can write the overlap of the 1-step QAOA state with the solution bit string as
\begin{align}
\braket{\sv|\psi_\param} &= \frac{1}{2^n} (\bra{0}+i\bra{1})^{\otimes n} \sum_{\xv \in \{0,1\}^n} e^{i \gamma c( d_1(\xv,\sv), d_2(\xv,\sv))} \ket{\xv\oplus\sv} \nonumber= \frac{1}{2^n} \sum_{k_1=0}^{n_1}\sum_{k_2=0}^{n_2} i^{k_1+k_2} e^{i \gamma c(k_1, k_2)} \binom{n_1}{k_1} \binom{n_2}{k_2}  \\
&= \frac{1}{2^n} \sum_{k_1=0}^{n_1}\sum_{k_2=0}^{n_2}  \binom{n_1}{k_1} \binom{n_2}{k_2} \exp\Big[i n\frac{\pi}{2} P_n\Big(\frac{k_1}{n_1}, \frac{k_2}{n_2}\Big) \Big] ,
\label{eq:prod-sym-sum}
\end{align}
where we used the form of $c(k_1, k_2)$ in Eq.~\eqref{eq:prod-sym-cost} to write
\begin{equation}
P_n(\xi_1, \xi_2) = \sum_{i=1}^2 \alpha_i \xi_i +  2\Gamma f(\xi_1, \xi_2) + O\Big(\frac{1}{n}\Big).
\end{equation}

Let us denote $H(\xi)= -\xi \log \xi - (1-\xi) \log (1-\xi)$ as the binary entropy function, then Stirling's approximation yields
\begin{align*}
    \binom{n_i}{k_i} = \frac{1}{\sqrt{2\pi k_i(1-k_i/n_i)}} \exp[n_i H(k_i/n_i)] \times (1+o(1)).
\end{align*}
We next approximate the sums in \eqref{eq:prod-sym-sum} with integrals using the Euler-Maclaurin formula and group the boundary terms in the error as we did previously in Section~\ref{sec:proof-unit-prob}, yielding
\begin{align}
\braket{\sv|\psi_\param} &= \frac{1}{2^n} \int_0^{n_1} dk_1 \int_0^{n_2} dk_2 \binom{n_1}{k_1} \binom{n_2}{k_2} \exp\Big[i n\frac{\pi}{2} P_n\Big(\frac{k_1}{n_1}, \frac{k_2}{n_2}\Big) \Big] \times(1+o(1)) \nonumber \\
&= \frac{1}{2^n}\int_0^{n_1} \int_0^{n_2}  \exp\Big[\sum_{i=1}^2 n_i H\big(\frac{k_i}{n_i}\big)+ i n\frac{\pi}{2} P_n\Big(\frac{k_1}{n_1}, \frac{k_2}{n_2}\Big) \Big] \prod_{i=1}^2 \frac{dk_i}{\sqrt{2\pi k_i (1-k_i/n_i)}}  \times (1+o(1)) \nonumber \\
&=\iint_0^{1}  \exp\Big[ n\sum_{i=1}^2 \alpha_i H(\xi_i) - n\log 2+ i n\frac{\pi}{2} P_n(\xiv) \Big] \prod_{i=1}^2 \frac{d\xi_i \sqrt{\alpha_i n} }{\sqrt{2\pi \xi_i (1-\xi_i)}} \times  (1+o(1)) \nonumber \\
&=\iint_0^{1}  \exp\Big[ n S(\xiv) + i\epsilon_n(\xiv) \Big] \prod_{i=1}^2 \frac{d\xi_i\sqrt{\alpha_i n}  }{\sqrt{2\pi \xi_i (1-\xi_i)}} \times (1+o(1)), \label{eq:integral-2d}
\end{align}
where we changed integration variable to $\xi_i=k_i/n_i$, regrouped terms to obtain
\begin{equation}
S(\xiv) = \sum_{i=1}^2 \Big[\alpha_i H(\xi_i) + i\frac{\pi}{2} \alpha_i \xi_i \Big] - \log 2 + i\pi \Gamma f(\xiv),
\end{equation}
and denoted $\epsilon_n(\xiv)$ as the sub-leading order terms in $P_n(\xv)$ that remains uniformly bounded as $n\to\infty$.

Observe that $\Re[S(\xiv)]$ achieves a unique maximum at $\xiv^* = (1/2,1/2)$ in the domain of integration $\xiv\in (0,1)^2$.
We want to apply the saddle point method to evaluate the integral in Eq.~\eqref{eq:integral-2d}, but we will need to ensure $\xiv^*$ is a non-degenerate saddle point of $S(\xiv)$.
This translates to the condition that $\nabla S(\xiv^*)=0$ and $\det H_S(\xiv^*)\neq 0$, where $H_S$ is the Hessian of $S$. To this end, let us examine $\nabla S(\xiv^*)$, whose components are
\[
\pder[S]{\xi_1}\Big|_{\xiv=\xiv^*} = \frac{i\pi}{2} \Big[\alpha_1 + \frac{\Gamma}{2^{\ell-2}} \sum_\mu \kappa_\mu (a_\mu-b_\mu) \Big],
\qquad
\pder[S]{\xi_2}\Big|_{\xiv=\xiv^*} = \frac{i\pi}{2} \Big[\alpha_2 + \frac{\Gamma}{2^{\ell-2}} \sum_\mu \kappa_\mu (c_\mu-d_\mu) \Big].
\]
Since the only free parameter is $\Gamma$, these components can be simultaneously made zero only if 
\begin{equation} 
\textstyle
\frac{1}{\alpha_1} \sum_\mu \kappa_\mu (a_\mu - b_\mu) = \frac{1}{\alpha_2} \sum_\mu \kappa_\mu (c_\mu - d_\mu) \neq 0,
\end{equation}
which is the condition in Eq.~\eqref{eq:grad-cond}.
Assuming this condition, we choose
\begin{equation} \label{eq:opt-Gam-prod}
\Gamma =  -2^{\ell-2} \alpha_1  \Big[\sum_\mu \kappa_\mu (a_\mu - b_\mu)\Big]^{-1}
\end{equation}
to ensure that $\nabla S(\xiv^*)=0$.
Let us next examine the second-order derivatives in the Hessian matrix elements, which are
\begin{align*}
\frac{\partial^2 S}{\partial \xi_1\xi_2}\Big|_{\xiv=\xiv^*}
&= \frac{i \pi\Gamma}{2^{\ell-2}} \sum_\mu \kappa_\mu (a_\mu - b_\mu)(c_\mu - d_\mu), \\
\frac{\partial^2 S}{\partial \xi_1^2}\Big|_{\xiv=\xiv^*}
&= - 4 \alpha_1 +  \frac{i \pi\Gamma}{2^{\ell-2}} \sum_\mu \kappa_\mu [(a_\mu-b_\mu)^2 - (a_\mu + b_\mu)], \\
\frac{\partial^2 S}{\partial \xi_2^2}\Big|_{\xiv=\xiv^*}
&= - 4 \alpha_2 +  \frac{i \pi\Gamma}{2^{\ell-2}} \sum_\mu \kappa_\mu [(c_\mu-d_\mu)^2 - (c_\mu + d_\mu)].
\end{align*}
Note that the condition that $\det H_S(\xiv^*)\neq0$ with the $\Gamma$ chosen in \eqref{eq:opt-Gam-prod} is equivalent to the condition \eqref{eq:hessian-cond} in Assumption~\ref{as:prod-sym-form}.
Hence, $\xiv^*$ is indeed a non-degenerate saddle point of $S(\xiv)$ under the assumption.
Now that we have met the requirements for applying the saddle point method, we can evaluate the integral in Eq.~\eqref{eq:integral-2d} to get
\begin{align}
\lim_{n\to\infty} |\braket{\sv|\psi_\param}|^2 = \lim_{n\to\infty} \Big(\frac{2\pi}{n} \frac{n}{2\pi/4}\Big)^2 \alpha_1\alpha_2 e^{2n \Re[S(\xiv^*)]}  \frac{1}{|\det H_S(\xiv^*)|} = \frac{16\alpha_1\alpha_2}{|\det H_S(\xiv^*)|},
\end{align}
where we used the fact that $\Re[S(\xiv^*)] = 0$.
\end{proof}

\section{QAOA can solve near-symmetric CSPs in polynomial time}
\label{sec:near-symmetric}

Building on the result on symmetric CSPs in the previous section, we now consider more general families of optimization problems where the symmetry is broken while still enabling a high success probability for the QAOA. One reason to break the symmetry is to construct problems that are potentially more challenging for classical algorithms, since they might not be able to take advantage of the symmetry if known in advance.

While there are multiple ways to break the assumption of symmetry, here we focus on an explicit method to construct CSPs with broken symmetry that uses only local clauses. The key idea is to take any symmetric CSPs consisting of local clauses and ``sparsify'' it by randomly sampling the clauses that make up the cost function. (We will briefly discuss another method to break symmetry in Remark~\ref{rem:break-sym} at the end of this section.)
Our result shows if the QAOA success probability is high on the original symmetric CSP, then it remains high on the sparsified version with the symmetry broken.

The starting point of our construction is as follows. Suppose we are given a symmetric CSP of the form $C(\xv) = \sum_\alpha C_\alpha(\xv)$, where each $C_\alpha$ is a single unit-weight constraint clause. For example, the clause $\phi_\alpha = x_i \wedge x_j \wedge \lnot x_k$ would correspond to $C_\alpha(\xv) =(1-x_i)(1-x_j)x_k$.
We then consider a ``sparsified'' version of the same CSP
\begin{equation}
	\tilde{C}_w(\xv) = \sum_{\alpha} w_\alpha C_\alpha(\xv),
\end{equation}
where $w_\alpha$'s are  i.i.d. random weights drawn from some distribution.
For concreteness, we consider a specific case where each $w_\alpha$ is a Bernoulli random variable with mean $f$.

Note that the symmetry is typically broken in the sparsified CSP.
Nevertheless, we show in the below theorem that as long as the QAOA succeeds for the original symmetric CSP with sufficiently small parameter $\gamma$, the QAOA also succeeds for the near-symmetric, sparsified CSPs.

\begin{thm}\label{thm:sparsified}
Assume the 1-step QAOA with $(\gamma,\beta)= (\gamma_0/n^{\ell-1}, -\pi/4)$ for some constant $\gamma_0$ achieves $\Omega(1)$ ground state probability on a symmetric $\ell$-CSP with $\Theta(n^\ell)$ unit-weight constraint clauses. Consider a random Bernoulli-sparsified version of the same CSP where we choose $f=d_n/n^{\ell-1}$ so that the interaction graph has average degree $d_n$. Then as long as $d_n=\Omega(n)$, the QAOA can also achieve $\Omega(1)$ ground state probability on the sparsified CSP with $\Omega(1)$ probability over the random instances.
\end{thm}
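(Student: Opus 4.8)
The plan is to keep $\beta=-\pi/4$ and, using the reduction to $\sv=0^n$ behind Eq.~\eqref{eq:overlap-general}, choose the sparsified angle $\tilde\gamma=\gamma_0/d_n$ so that $f\tilde\gamma=\gamma$ matches the original parameter (here $f=d_n/n^{\ell-1}$ and $\gamma=\gamma_0/n^{\ell-1}$). Writing the overlap as $A_w:=\braket{\sv|\psi_{\tilde\gamma,\beta}}=\frac{1}{2^n}\sum_{\xv}i^{|\xv|}e^{i\tilde\gamma\tilde C_w(\xv)}$, I would run a second-moment argument over the i.i.d.\ weights $\{w_\alpha\}$: establish $|\E_w[A_w]|=\Omega(1)$ together with a small $\Var_w(A_w)$, then invoke Chebyshev to conclude $|A_w|^2=\Omega(1)$ for an $\Omega(1)$ fraction of random instances. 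Note that $\sv$ remains a ground state of $\tilde C_w$ since $\tilde C_w(\sv)=0$.

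\emph{First moment.} Since the $w_\alpha$ are independent and each clause is $\{0,1\}$-valued, $\E_w[e^{i\tilde\gamma\tilde C_w(\xv)}]=\prod_\alpha(1-f+fe^{i\tilde\gamma C_\alpha(\xv)})=z^{C(\xv)}$ with $z:=1+f(e^{i\tilde\gamma}-1)$. Expanding gives $\log z=i\gamma'-\mu+\cdots$, where $\gamma'=\gamma(1+o(1))$ and the damping rate is $\mu=\tfrac12 f(1-f)\tilde\gamma^2(1+o(1))=\Theta(1/(n^{\ell-1}d_n))$. Thus $\E_w[A_w]=\frac{1}{2^n}\sum_\xv i^{|\xv|}e^{i\gamma'C(\xv)}e^{-\mu C(\xv)}$, and its deviation from the original amplitude is bounded in absolute value by $\frac{1}{2^n}\sum_\xv|e^{-\mu C(\xv)}-1|\le\mu\,\E_\xv[C(\xv)]=\Theta(n/d_n)$, plus a negligible contribution from the shift $\gamma'-\gamma$. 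Taking the hidden constant in $d_n=\Omega(n)$ large makes this smaller than any fixed fraction of $|\braket{\sv|\psi_{\gamma,\beta}}|$, so the assumed $\Omega(1)$ success on the symmetric problem yields $|\E_w[A_w]|=\Omega(1)$.

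\emph{Second moment.} For a pair $(\xv,\yv)$, independence gives $\E_w[e^{i\tilde\gamma(\tilde C_w(\xv)-\tilde C_w(\yv))}]=z^{n_{10}}\bar z^{n_{01}}$, where $n_{10}$ and $n_{01}$ count clauses violated by exactly one of $\xv,\yv$; subtracting the product of the individual means factors out $1-|z|^{2n_{11}}$ with $n_{11}(\xv,\yv)=\sum_\alpha C_\alpha(\xv)C_\alpha(\yv)$. Using $|z|\le1$, $|z|^{n_{10}+n_{01}}\le1$, and $1-|z|^{2n_{11}}\le n_{11}(1-|z|^2)=\Theta(n_{11}/(n^{\ell-1}d_n))$, I bound $|\Var_w(A_w)|\le\Theta(1/(n^{\ell-1}d_n))\,\E_{\xv,\yv}[n_{11}]$. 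Since $\E_{\xv,\yv}[n_{11}]=\sum_\alpha p_\alpha^2\le\sum_\alpha p_\alpha=\E_\xv[C(\xv)]=\Theta(n^\ell)$, where $p_\alpha=\Pr_\xv[C_\alpha(\xv)=1]$, this gives $\Var_w(A_w)=O(n/d_n)$.

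Combining the two bounds, for $d_n\ge Cn$ with $C$ sufficiently large we have $\Var_w(A_w)\le\tfrac12|\E_w[A_w]|^2$, and Chebyshev then gives $|A_w|\ge\tfrac12|\E_w[A_w]|=\Omega(1)$ with probability at least $\tfrac12$ — i.e.\ $\Omega(1)$ ground-state probability on an $\Omega(1)$ fraction of sparsified instances. The main obstacle is that both the first-moment damping $\mu\,\E_\xv[C(\xv)]$ and the variance scale as $n/d_n$, so one must argue that the random weights neither damp the coherent amplitude to zero nor inject large fluctuations; this is precisely what pins the threshold at $d_n=\Omega(n)$. I expect the delicate part to be the first moment: one must keep the damping argument black-box, relying only on $|\braket{\sv|\psi_{\gamma,\beta}}|=\Omega(1)$ and $\E_\xv[C(\xv)]=\Theta(n^\ell)$ rather than on any saddle-point structure, since the theorem assumes nothing about the symmetric CSP beyond QAOA's success on it.
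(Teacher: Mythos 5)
Your proof is correct and lands on the same $n/d_n$ threshold, but by a genuinely different route than the paper. The paper first proves a standalone lemma bounding the expected squared norm of the \emph{state} difference, $\E_w \bigl\| (e^{i\gamma C} - e^{i(\gamma/f)\tilde{C}_w})\ket{+}^{\otimes n}\bigr\|^2 \le \frac{\gamma^2\sigma^2}{f^2}\E_\xv C(\xv)$, using only $\E_w[\Delta(\xv)]=0$, $\E_w[\Delta^2(\xv)]=\frac{\sigma^2}{f^2}C(\xv)$, and the pointwise inequality $|e^{ia}-e^{ib}|\le |a-b|$; Markov's inequality, unitarity of the $\beta$-layer, and the triangle inequality on the overlap then finish the job. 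You instead compute the first and second moments of the overlap amplitude $A_w$ directly, exploiting the Bernoulli product structure $\E_w[e^{i\tilde\gamma\tilde{C}_w(\xv)}]=z^{C(\xv)}$, and conclude via Chebyshev. The two are quantitatively equivalent: your bias bound $\mu\,\E_\xv[C]=\Theta(n/d_n)$ and variance bound $O(n/d_n)$ together control $\E_w|A_w-\braket{\sv|\psi_{\gamma,\beta}}|^2$, which is exactly the quantity the paper's lemma dominates. What the paper's route buys is generality --- the lemma holds for any i.i.d.\ weight distribution with mean $f$ and variance $\sigma^2$, and controls the full state (hence every observable), whereas your computation is tied to $\{0,1\}$-valued weights and clauses and to the single amplitude $\braket{\sv|\tilde\psi_w}$. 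What your route buys is a cleaner separation of the two failure modes (deterministic damping of the coherent mean versus instance-to-instance fluctuations), which makes the origin of the $d_n=\Omega(n)$ threshold more transparent. Two minor points: your remark that $\sv$ remains a ground state because $\tilde{C}_w(\sv)=0$ only applies to satisfiable instances (for frustrated symmetric CSPs neither your argument nor the paper's verifies that $\sv$ stays optimal after sparsification --- both only control the overlap with $\sv$); and your Chebyshev step needs $\Var_w(A_w)\le\frac{1}{8}|\E_w[A_w]|^2$ rather than $\frac{1}{2}$ to get success probability $1/2$, which only shifts constants.
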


As shown previously in Theorems~\ref{thm:unit-prob} and \ref{thm:prod-sym}, we can achieve $\Omega(1)$ ground state probability for many families of symmetric $\ell$-CSPs using the 1-step QAOA with $\gamma = \Theta( 1/n^{\ell-1})$ .
Combined with the above theorem, this tells us that we can also solve the corresponding near-symmetric CSPs with high probability.

To prove Theorem~\ref{thm:sparsified}, we start by showing that in the small angle limit, the 1-step QAOA state generated by the sparsified $\tilde{C}_w$ with appropriately rescaled $\gamma$ is approximately the same as the QAOA state generated with the original $C$. (Note that $C$ needs not be symmetric for this lemma.)

\begin{lem}
	\label{lem:sparsified}
	Given any CSP $C=\sum_\alpha C_\alpha$ consisting of unit-weight constraints, i.e. $C_\alpha^2 = C_\alpha$, let $\tilde{C}_w$ be its sparsified version with randomly chosen i.i.d. weights $(w_\alpha)_\alpha$ drawn from a distribution with mean $\E[w_\alpha]=f\neq 0$ and variance $\sigma^2$. Then
	\begin{equation}
		\E_w \left\| \big( e^{i\gamma C} - e^{i(\gamma/f) \tilde{C}_w} \big) \ket{+}^{\otimes n}\right\|^2
		\le \frac{\gamma^2 \sigma^2 }{f^2} \, \E_{\xv} C(\xv)
	\end{equation}
\end{lem}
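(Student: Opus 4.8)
The plan is to exploit that both $C$ and $\tilde C_w$ are diagonal in the computational basis, so that the whole problem collapses to a pointwise estimate on each bit string followed by averaging. Writing $\ket{+}^{\otimes n} = 2^{-n/2}\sum_{\xv}\ket{\xv}$ and using that $e^{i\gamma C}$ and $e^{i(\gamma/f)\tilde C_w}$ merely multiply each $\ket{\xv}$ by a unit-modulus phase $e^{i\gamma C(\xv)}$ and $e^{i(\gamma/f)\tilde C_w(\xv)}$ respectively, the left-hand side becomes
\[
\E_w \, \E_{\xv}\left| e^{i\gamma C(\xv)} - e^{i(\gamma/f)\tilde C_w(\xv)} \right|^2,
\]
where $\E_{\xv}$ denotes the average over uniform $\xv\in\{0,1\}^n$.

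First I would bound the integrand pointwise. Since both exponents are real, the elementary identity $|e^{i\theta_1}-e^{i\theta_2}|^2 = 4\sin^2\!\big(\tfrac{\theta_1-\theta_2}{2}\big) \le (\theta_1-\theta_2)^2$ gives
\[
\left| e^{i\gamma C(\xv)} - e^{i(\gamma/f)\tilde C_w(\xv)} \right|^2 \le \gamma^2 \Big( C(\xv) - \tfrac{1}{f}\tilde C_w(\xv) \Big)^2.
\]
Next I would take the expectation over the random weights. Writing $C(\xv)-\tfrac{1}{f}\tilde C_w(\xv) = \sum_\alpha \tfrac{f-w_\alpha}{f}\,C_\alpha(\xv)$ and expanding the square, independence of the $w_\alpha$ together with $\E_w[f-w_\alpha]=0$ annihilates every cross term, while each diagonal term contributes $\Var(w_\alpha)=\sigma^2$, so that
\[
\E_w\Big(C(\xv)-\tfrac{1}{f}\tilde C_w(\xv)\Big)^2 = \frac{\sigma^2}{f^2}\sum_\alpha C_\alpha(\xv)^2.
\]

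The one genuinely essential use of the hypothesis is the idempotency $C_\alpha^2 = C_\alpha$ (each clause is $\{0,1\}$-valued), which collapses $\sum_\alpha C_\alpha(\xv)^2 = \sum_\alpha C_\alpha(\xv) = C(\xv)$. Combining the three displays and interchanging the two (finite, absolutely convergent) expectations by Fubini yields
\[
\E_w \, \E_{\xv}\left| e^{i\gamma C(\xv)} - e^{i(\gamma/f)\tilde C_w(\xv)} \right|^2 \le \frac{\gamma^2\sigma^2}{f^2}\, \E_{\xv}\, C(\xv),
\]
which is exactly the claimed bound. I do not expect any substantive obstacle here: the argument is a short chain consisting of the trig inequality $4\sin^2(x/2)\le x^2$, a variance computation relying on independence and the vanishing first moment of $f-w_\alpha$, and the idempotency of unit-weight clauses. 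The only points needing care are verifying that both phases are genuinely real (so the sine bound applies with constant $1$) and justifying the swap of expectations, both of which are immediate since all sums are finite.
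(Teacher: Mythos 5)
Your proposal is correct and follows essentially the same route as the paper's proof: reduce to the average of $\bigl|e^{i\gamma C(\xv)}-e^{i(\gamma/f)\tilde C_w(\xv)}\bigr|^2$ over $\xv$, bound it by $\gamma^2\Delta^2(\xv)$ via the phase-difference inequality, and compute $\E_w[\Delta^2(\xv)]=\tfrac{\sigma^2}{f^2}C(\xv)$ using independence and $C_\alpha^2=C_\alpha$. The only cosmetic difference is that you invoke $4\sin^2(x/2)\le x^2$ where the paper states $|e^{ia}-e^{ib}|\le|a-b|$, which is the same bound.
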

\begin{proof}
Let $\ket{\delta\psi} = (e^{i\gamma C} - e^{i(\gamma/f) \tilde{C}_w} ) \ket{+}^{\otimes n} $, then
\begin{align*}
	\ket{\delta\psi} &=  \frac{1}{2^{n/2}} \sum_{\xv} \Big( e^{i\gamma C(\xv)} - e^{i(\gamma/f) \tilde{C}_w(\xv)} \Big) \ket{\xv}, \\
	\text{and} \quad \|\ket{\delta\psi}\|^2 &= \E_{\xv} \left|e^{i\gamma C(\xv)} - e^{i(\gamma/f) \tilde{C}_w(\xv)}\right|^2.
\end{align*}
For any bit string $\xv$, let
\begin{equation*}
	\Delta(\xv) := C(\xv) - \frac{1}{f}\tilde{C}_w(\xv) = \frac{1}{f}\sum_\alpha ( f-w_\alpha  ) C_\alpha(\xv).
\end{equation*}
It's easy to see that $\E_w[\Delta(\xv)]=0$. Furthermore,
\begin{align*}
	\E_w [\Delta^2(\xv)] = \frac{1}{f^2}  &\sum_\alpha \sum_{\alpha'} \E_w\Big[(f - w_\alpha)(f-w_{\alpha'})\Big]  C_\alpha(\xv) C_{\alpha'}(\xv).
\end{align*}
Noting that the covariance between $w_\alpha$ and $w_{\alpha'}$ is zero due to independence if $\alpha\neq \alpha'$, we have
\begin{align} \label{eq:sparsified-variance}
	\E_w [\Delta^2(\xv)]
	= \frac{1}{f^2} \sum_\alpha \E_w[(f-w_\alpha)^2] C_\alpha^2(\xv) = \frac{\sigma^2}{f^2} \sum_\alpha C_\alpha(\xv) = \frac{\sigma^2}{f^2} C(\xv),
\end{align}
where we used the assumption of unit-weight constraints to set $C^2_\alpha = C_\alpha$.
Finally, using the fact that $|e^{ia}-e^{ib}| \le |a-b|$ for any $a,b\in\R$, we have 
\begin{equation*}
	\E_w \|\ket{\delta\psi}\|^2 \le \E_{w,\xv} [\gamma^2 \Delta^2 (\xv)] = \frac{\gamma^2 \sigma^2}{f^2} \E_{\xv} C(\xv),
\end{equation*}
concluding the proof.
\end{proof}

\begin{proof}[Proof of Theorem~\ref{thm:sparsified}]
Here we focus on the case when $C$ is a symmetric $\ell$-CSP with $\Theta(n^\ell)$ unit-weight clauses, and each $w_\alpha$ is a Bernoulli random variable. Then $\E_\xv C(\xv) = \Theta(n^\ell)$ and $\sigma^2 = f(1-f)$. 
Let $\ket{\psi}$ and $\ket{\tilde{\psi}_w}$ be the QAOA states corresponding to original symmetric CSP and the sparsified CSP, respectively. Then using Lemma~\ref{lem:sparsified} and Markov's inequality, we have for any $K>1$,
\[
\|\ket{\psi}-\ket{\tilde{\psi}_w}\| \le O(K \gamma n^{\ell/2}\sqrt{(1-f)/f}) = O(K\sqrt{n/d_n})
\]
with $1-1/K^2$ probability over the weights $w$. From the  assumption we have $|\braket{\psi|\sv}|^2\ge \Omega(1)$, which together implies
\begin{align*}
	\big|\braket{\tilde{\psi}_w|\sv}\big|^2 \ge \Omega(1) -  O(K\sqrt{n/d_n}).
\end{align*}
	This remains at least $\Omega(1)$ if $d_n = \Omega(K^2 n)$.
\end{proof}

\begin{rem}
We now provide some comments on the proof of Theorem~\ref{thm:sparsified}.
In the parameter regime $\gamma=\Theta(1/n^{\ell-1})$ considered, both $\gamma C = \Theta(n)$ and $\gamma \tilde{C}_w/f=\Theta(n)$ are not small. However, using Lemma~\ref{lem:sparsified} we find that $\big\|(e^{i\gamma C}-e^{i\gamma\tilde{C}_w/f})\ket{+}^{\otimes n}\big\|$ is small.
To give some intuition for why this is the case, let us think of $\tilde{C}_w/f$ as a rescaled version of the sparsified cost function so that on average,
$\E_w[\tilde{C}_w (\xv) /f] = C(\xv)=\Theta(n^\ell)$.
Reading off Eq.~\eqref{eq:sparsified-variance} and plugging in $\sigma^2=f(1-f)$ for Bernoulli random weights, we see that its standard deviation is $\sqrt{(1-f)/f} \sqrt{C(\xv)} = O(n^{\ell-1})$, where the last equality used the condition $f=\Omega(1/n^{\ell-2})$ from the theorem statement.
Consequently, $\gamma \tilde{C}_w(\xv)/f=\gamma C(\xv) \pm O(1)$, which means the fluctuations due to sparsification is relatively small compared to the typical $\Theta(n)$ phase in the operator $e^{i\gamma \tilde{C}_w/f}$, resulting in approximately the same quantum state as the original QAOA.
\end{rem}

\begin{rem}[Learning the hidden string from observing individual clauses]
\label{rem:LPN}
In the setting of near-symmetric CSPs constructed with the above sparsification method, one natural question to ask is: Can one exploit the knowledge of the approximate symmetry to learn the hidden string $\sv$ by observing the sampled clauses in the cost function and running a classical algorithm?
As an example, let us consider a near-$S_n$-symmetric CSP constructed by symmetrizing a local clause function and then sparsifying. 
Specifically, for any bit string $\av\in \{0,1\}^n$ with Hamming weight $\ell$, consider the $\ell$-local clause
\begin{equation*}
C_\av(\xv) = \av\cdot (\xv\oplus \sv),
\end{equation*}
where $\av\cdot\bv \in \{0,1\}$ denotes the modulo-2 inner product of $\av$ and $\bv$.
The cost function of the $S_n$-symmetric CSP can be written as $C(\xv)= \mathcal{N}\sum_{\pi \in S_n} C_{\pi(\av)}(\xv)$, where $\mathcal{N}$ is a normalization constant accounting for repeated clauses. 
Now, in a sparsified version of the cost function, observing a clause $C_\bv$ allows one to infer the value of $e_\bv=\bv\cdot \sv$. With $\Omega(n)$ observed clauses, one can solve the corresponding system of linear equations with Gaussian elimination to determine $\sv$ in this case.

However, there is a strong limitation to this approach of recovering $\sv$. Note that the problem of learning $\sv$ from the observations $(\av, e_\av=\av\cdot \sv)$ is a parity learning problem (see e.g., \cite{Blum2003LPN}).
This is closely related to the learning parity with noise (LPN) problem, which is believed to be so difficult that it has been used as a cryptographic assumption~\cite{Pietrzak2012LPNcrypto}.
In the above setting, we have limited ourselves to $\av$'s that have bounded Hamming weights, which relates to the sparse LPN problem (see e.g., \cite{Applebaum2010public, chen2024sparselpn}).
Indeed, we can simulate the sparse LPN problem by including in the cost function symmetrized clauses of the form $C'_\av(\xv) = 1-C_\av(\xv)$, sampled with a different probability than $C_\av$.
Given a random clause, it is impossible to tell whether it came from $C_\av$ or $C'_\av$, and observing either would yield the equation $\av\cdot\sv = e_\av$ or $\av\cdot\sv = 1-e_\av$, respectively.
This results in a noisy system of linear equations equivalent to the sparse LPN problem, which is believed to take exponential time to solve classically in various regimes \cite{Applebaum2010public,chen2024sparselpn}.
\end{rem}

\begin{rem}[Other methods to break symmetry]
\label{rem:break-sym}
Another way to generalize our results to symmetry-broken problems while retaining the QAOA's success is as follows. Suppose the QAOA achieves $\Omega(p(n))$ ground state overlap with the solution bit string for CSPs that are $G$-symmetric for some function $p(n)$, then it is not difficult to see that the QAOA can also solve any CSP whose cost function $C(\xv)$ obeys $G$-symmetry for all but $o(p(n))$ fraction of all $2^n$ bit strings inputs. One method to explicitly construct such a symmetry-broken CSP is to take a $G$-symmetric CSP, randomly choose $o(p(n))$ fraction of $2^n$ bit strings and alter their cost function value. However, this will result in a highly nonlocal CSP that may not resemble any natural or practically relevant optimization problems.
\end{rem}

\section{Performance of classical algorithms}
\label{sec:classical}

\subsection{Oracular setting}
We first consider classical algorithms in the oracular setting where the algorithm can only query the cost function as a black box.
In this setting, we show that there {\em exists} an efficient classical algorithm to solve any $S_n$-symmetric CSPs satisfying Assumption~\ref{as:distinct-values} with $O(n/\log n)$ queries, and that this is tight.
We also show that a natural class of hill-climbing algorithms is unable to solve them all efficiently.

\begin{claim} \label{cl:alg-for-distinct}
For any $S_n$-symmetric CSP based on a known cost function $C(\xv)=c(|\xv\oplus\sv|)$ taking distinct values as in Assumption~\ref{as:distinct-values}, there is a classical algorithm that uses $O(n/\log n)$ queries to $C$ to find the solution, and runs in polynomial time. 
Furthermore, any classical algorithm needs $\Omega(n/\log n)$ queries to find the solution even knowing the shape of $c(k)$ in advance.
\end{claim}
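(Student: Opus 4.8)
The two directions call for different techniques, so I would treat them separately. For the upper bound, the key point is that Assumption~\ref{as:distinct-values} turns each query into an exact distance measurement: since $c$ is known and injective, inverting the returned value $C(\xv)=c(|\xv\oplus\sv|)$ recovers the Hamming distance $|\xv\oplus\sv|$ exactly. I would first spend a single query at $\xv=0^n$ to learn the weight $|\sv|=|0^n\oplus\sv|$. For any subsequent query $\xv$ with support $T=\{i:x_i=1\}$, a direct computation gives
\begin{equation*}
|\xv\oplus\sv| = |T| + |\sv| - 2\sum_{i\in T} s_i ,
\end{equation*}
so that, knowing $|\sv|$ and $|T|$, the returned distance is equivalent to the subset sum $\sum_{i\in T} s_i$.

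After this preprocessing, recovering $\sv$ is precisely the classical \emph{coin-weighing problem}: determine an unknown $\sv\in\{0,1\}^n$ from queries that report the number of ones of $\sv$ inside a chosen subset. I would then invoke its known solution (Erd\H{o}s--R\'enyi, Lindstr\"om, and Cantor--Mills): there is an explicit non-adaptive family of $m=O(n/\log n)$ subset queries whose answers determine $\sv$ uniquely, together with a polynomial-time decoding procedure that reconstructs $\sv$ from the $m$ subset sums. Translating each subset query back into a Hamming-distance query, and hence into one evaluation of $C$, and adding the single query used to learn $|\sv|$, yields a classical algorithm using $1+m=O(n/\log n)$ queries to $C$ and running in polynomial time, which proves the first half of the claim.

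For the matching lower bound I would use a decision-tree counting argument that holds even for adaptive algorithms with full knowledge of the shape of $c$. Each evaluation of $C$ returns one of at most $n+1$ possible values, namely $c(0),\dots,c(n)$, so a deterministic algorithm making $q$ queries is a decision tree of depth $q$ in which every internal node has at most $n+1$ children, and hence at most $(n+1)^q$ leaves. If two distinct hidden strings $\sv\neq\sv'$ generate the same transcript of answers, they reach the same leaf and receive the same output guess, so the algorithm errs on at least one of them; a worst-case-correct algorithm therefore needs a distinct leaf for each of the $2^n$ candidate strings, forcing $(n+1)^q\ge 2^n$ and thus $q\ge n/\log_2(n+1)=\Omega(n/\log n)$. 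For randomized algorithms the same bound follows from Yao's principle by drawing $\sv$ uniformly at random, since each query still splits the surviving candidates into at most $n+1$ classes.

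I expect the only real subtlety to lie in the upper bound: the reduction to coin weighing is the crucial move, and the improved $O(n/\log n)$ rate (as opposed to the naive $n$ obtained by flipping one bit at a time) is inherited entirely from the non-adaptive coin-weighing constructions and the existence of efficient decoding. The lower bound, by contrast, is a routine information-theoretic count once one observes that knowing $c$ in advance cannot produce more than $n+1$ distinguishable answers per query.
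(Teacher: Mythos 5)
Your proposal is correct and follows essentially the same route as the paper: invert the injective $c$ to turn each query into a subset-sum (coin-weighing / quantitative group testing) query after one query at $0^n$, invoke the known $O(n/\log n)$ constructions with polynomial-time decoding, and prove the lower bound by counting the at most $n+1$ possible answers per query against the $2^n$ candidate strings. The only cosmetic differences are that the paper cites the quantitative-group-testing literature rather than the classical coin-weighing papers, and states the lower bound informally as an information count rather than as an explicit decision-tree leaf count.
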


\begin{proof}
As the algorithm knows the cost function $C$ is $S_n$-symmetric and it takes distinct values, we can assume that $C(\xv) = |\xv\oplus\sv|$, by remapping the output of the cost function if necessary. We first describe a (well-known) simple classical algorithm which uses $O(n)$ queries and makes $O(n)$ additional operations. Query all bit strings of Hamming weight at most 1. The query to $0^n$ returns $|\sv|$, while each query to a bit string of Hamming weight 1 returns either $|\sv|-1$ or $|\sv|+1$, depending on whether the corresponding bit of $\sv$ is equal to 0 or 1, enabling us to output $\sv$.

Perhaps surprisingly, this complexity can be improved to $O(n/\log n)$ queries. First we query $\xv=0^n$, as before, to learn the Hamming weight of $\sv$. Subsequent queries to bit strings $\xv$ return the number of bits that differ between $\sv$ and $\xv$. Using this and our knowledge of the Hamming weights of $\sv$ and $\xv$ allows us to compute $|\sv \wedge \xv|$. This is equivalent to learning the number of 1's in $\sv$, restricted to an arbitrary subset $X$ of the bits. The problem of learning a bit string $\sv$ given queries of this form is known as quantitative group testing (QGT), and it turns out that we can learn an arbitrary bit string with $O(n/\log n)$ subset queries, and in polynomial time~\cite{wang16,gebhard22,shipra24}.

Finally, we provide a matching lower bound on the number of queries used by any classical algorithm. We note that to find the planted string $\sv$ requires $n$ bits of information. On the other hand, each classical query provides at most $O(\log(n))$ bits of information from the returned value of the cost function, since there can be at most $n+1$ distinct values. Hence, $\Omega(n/\log n)$ queries are necessary  to recover $\sv$.
\end{proof}

\begin{rem}[$O(1)$ quantum vs. $\Omega(n/\log n)$ classical query]
Recall that we have shown in Theorem~\ref{thm:distinct-values} that the QAOA with 1 query achieves $\Omega(1/\sqrt{n})$ of recovering the bit string as long as the cost function takes on distinct values, and thus $O(\sqrt{n})$ quantum queries is enough to find $\sv$ with high probability.
For specific cases such as the cost in Eq.~\eqref{eq:example-3SAT}, 1-step QAOA can find $\sv$ with $\Omega(1)$ probability due to Theorem~\ref{thm:distinct-values} and Lemma~\ref{lem:distinct-values}.
Hence, in these examples, $O(1)$ quantum queries are sufficient to recover $\sv$ compared to the $\Omega(n/\log n)$ classical query lower bound.
\end{rem}

\begin{rem}
We remark that the algorithms in Claim~\ref{cl:alg-for-distinct} heavily rely on the fact that the cost function takes on $n+1$ distinct values. 
For the first algorithm, one can for example adversarially construct a cost function where $c(|s|-1)=c(|s|+1)$, so that queries with bit strings of Hamming weight 1 would give no information about the bits of $\sv$. Note this adversarial cost function still satisfies Assumption~\ref{as:approximate-distinct-values} when $\big||\sv|- n/2\big| \ge \Omega(\sqrt{n\log n})$, allowing the QAOA to succeed.
Furthermore, the second algorithm based on QGT can run into issues when the evaluations of $|\sv \wedge \xv|$ are noisy, which can manifest from either non-distinctness or sparsification.
In some cases, for sufficiently strong noise, a superpolynomial lower bound on the query complexity of the noisy QGT problem has been shown using information-theoretical arguments~\cite{Chen2017QGTnoise}.
\end{rem}

In this oracular setting, we also consider a {\em hill climbing} algorithm, which is one that always makes (deterministic or random) local moves that reduce the cost function.

\begin{claim}
	There is an oracular $S_n$-symmetric CSP based on a cost function $c(|\xv|)$ taking distinct values, such that any hill climbing algorithm $\mathcal{A}$ using a random initial starting point requires exponentially many starting points to find the solution.
\end{claim}

\begin{proof}
Consider the cost function $c(0) = 0$, $c(k) = n-k+1$ for $k=1,\dots,n$. If $\mathcal{A}$ does not happen to choose $\xv=0^n$ as the initial bit string, all subsequent moves will increase the Hamming weight, eventually ending up in the local minimum $\xv = 1^n$. So $\Omega(2^n)$ random initial guesses are needed to find the unique solution.
\end{proof}

This does not contradict the previous result because the algorithms in Claim~\ref{cl:alg-for-distinct} are not hill climbing algorithms, in terms of the original cost function (as opposed to the cost function after remapping of the costs to $C(\xv) = |\xv\oplus\sv|$, in which case one can define a hill-climbing algorithm which does work).

\subsection{Simulated annealing}
Simulated annealing is a powerful general-purpose classical optimization algorithm that often serves as a point of comparison for quantum algorithms.
Indeed, previous work on quantum annealing~\cite{farhi2002QAAvsSA, muthukrishnan2016} for $S_n$-symmetric problems has compared its performance to simulated annealing. Their results have shown that quantum annealing is generally superior to simulated annealing, although there are problems where both algorithms take exponential time to find the global minimum.

For the problems we consider this paper, it is generally not difficult to construct instances where simulated annealing will take exponential time to find the solution.
Take for example the cost function in Eq.~\eqref{eq:example-3SAT-cost}, where there is a global minimum at $k=0$ and false minimum near $k=n$, and the cost decreases with $k$ in the range $k/n\ge 1/3$. Simulated annealing with a random initial string will start with $k\approx n/2$ with high probability. 
While there is a possibility of updates that lower $k$, it requires a sequence of $n/2-n/3=n/6$ bit flips to climb over the hill which has an energy barrier of height $\Theta(n^2)$. At any low temperature $T=O(n)$, each energy increasing move is only accepted with probability $\exp(-\Delta E/T) = \exp(-\Omega(1))$ which is uniformly bounded away from 1. Hence, the transition probability to escape the false minimum with single-bit updates is exponentially small in $n$, implying that simulated annealing will take exponential time.

\subsection{Best solvers from SAT and Max-SAT competitions}
\label{sec:benchmark}

In this section, we move beyond the limited examples of classical algorithms discussed above and benchmark the performance of more general, state-of-the-art classical algorithms on the symmetric and near-symmetric CSPs.
For this benchmark, we will construct a few families of $S_n$-symmetric and $(S_{n/2})^2$-symmetric SAT problems in the form of Boolean formulas.
We also construct related families of near-symmetric CSPs by using the random sampling trick discussed in Section~\ref{sec:near-symmetric}.
These constructed problems are then put to the test by six state-of-the-art classical algorithms, including some front-runners in recent SAT and Max-SAT competitions. While we find that some of these problems can be solved by a subset of the classical algorithms in polynomial time, we also discover families of problems with an approximate $(S_{n/2})^2$ symmetry that appear to require exponential run time from all tested classical solvers.
Since we have shown that 1-step QAOA can succeed in solving these problems in polynomial time, these constructed local CSPs serve as somewhat convincing candidates for exponential quantum speedup by a low-depth QAOA in optimization.

\paragraph{Construction of Benchmark Problems.}
We begin by explaining our constructions of $S_n$-symmetric SAT problems that will be used in the benchmarking test.
Given a solution string $\sv$ to be planted, we first transform the input $\xv$ into $\yv=\xv\oplus \sv$ and then construct a Boolean formula based on the bits of $\yv$.
For convenience of notation, we will denote $\ell$N$m$ for any $\ell\ge m\ge 0$ as the Boolean formula made by considering all $\ell$-subsets of $n$ bits and including a clause containing a disjunction of the selected literals with $m$ negations. Furthermore, we include all permutations of each subset whenever $m\neq 0,\ell$.
For example,
\begin{align*}
2\tN2 = \bigwedge_{i< j} (\lnot y_i \vee \lnot y_j ),
\qquad
3\tN1 = \bigwedge_{i\neq j \neq k} (\lnot y_i \vee y_j \vee  y_k),
\qquad
4\tN2 = \bigwedge_{i\neq j \neq k \neq l} (\lnot y_i \vee \lnot y_j \vee  y_k \vee y_l).
\end{align*}
As a shorthand, we also denote $\ell\tN m_1m_2 = \ell\tN m_1 \wedge \ell\tN m_2$.
For any given family of symmetric SAT problems, we may also construct a near-symmetric SAT problem by choosing to keep each Boolean clause with probability $f$.

As an example, consider the Max-4-SAT problem $4\tN1 \wedge 3\tN 31 \wedge 2\tN0$ whose cost function, when written in terms of $k=|\yv|=|\xv\oplus \sv|$, is
\begin{equation*}
c(k) = k(n-k)(n-k-1)(n-k-2) + k(n-k)(n-k-1) + \frac{k(k-1)(k-2)}{3!} + \frac{(n-k)(n-k-1)}{2!}.
\end{equation*}
Note that this example Boolean formula is not satisfiable, and there are two (frustrated) local minima at $k=0$ and near $k=n$, respectively. Note that the $k=0$ local minimum acquires $\Theta(n^2)$ energy from the 2N0 subformula, whereas the $k=n$ local minimum is penalized with $\Theta(n^3)$ energy by the 3N3 subformula. Hence, this construction ensures that $k=0$ is the true global minimum that corresponds to the exact planted solution.

We also consider $(S_{n/2})^2$-symmetric CSPs which allow us to construct more complex problems with additional local minima. In this construction, we first divide the $n$ bits into two equal-sized groups. Then there are two types of local clauses we can use to construct an $(S_{n/2})^2$-symmetric Boolean formula: (1) clauses that contains bits from one group but not the other, and (2) clauses that contain bits from both groups. We can construct formula from the first type in the same way as the $S_n$-symmetric case above, and we will denote these formulas in the same way as $\ell\tN m$.
For the second type of clauses, we consider all $\ell_1$-subsets from the first $n/2$ bits and all $\ell_2$-subsets from the second $n/2$ bits, and construct a clause with $m_1$ negations in the $\ell_1$-subset and $m_2$ negations in the $\ell_2$-subset. As before, we consider all permutations within the $\ell_j$-subset unless $m_j=0$ or $m_j=\ell_j$.
For brevity, we denote the Boolean formula constructed with this method as $(\ell_1, \ell_2)\tN (m_1, m_2)$.
Moreover, to satisfy Assumption~\ref{as:prod-sym-form} that leads to a guarantee for QAOA's success in solving the problem via Theorem~\ref{thm:prod-sym}, we further symmetrize the formula constructed from both types of clauses by joining another copy of the formula where the roles of the two groups are switched.

For concreteness, let us illustrate this construction with two example $(S_{n/2})^2$-symmetric formulas: $5\tN1$ and $(3,2)\tN(1,0)$. Given the planted string $\sv$, we set $\yv=\xv\oplus \sv$ and partition $\yv$ into two equal-sized substrings $(\yv^\G1, \yv^\G2)$. Then,
\begin{align*}
5\tN1 &= \bigwedge_{i\neq j \neq k \neq l \neq m }(\lnot y_i^\G1 \vee y_j^\G1 \vee y_k^\G1 \vee y_l^\G1 \vee y_m^\G1)\bigwedge_{i\neq j \neq k \neq l \neq m } (\lnot y_i^\G2 \vee y_j^\G2 \vee y_k^\G2 \vee y_l^\G2 \vee y_m^\G2), \\
(3,2)\tN(1,0) &= \bigwedge_{i\neq j \neq k, l<m } (\lnot y_i^\G1 \vee y_j^\G1 \vee y_k^\G1 \vee y_l^\G2 \vee y_m^\G2) \bigwedge_{i\neq j \neq k, l<m } (\lnot y_i^\G2 \vee y_j^\G2 \vee y_k^\G2 \vee y_l^\G1 \vee y_m^\G1).
\end{align*}
Furthermore, denoting $k_1=|\yv^\G1|$ and $k_2=|\yv^\G2|$, the corresponding cost functions to leading order in $n$ are:
\begin{eqnarray*}
c_{5\tN1}(k_1, k_2) &=& k_1(n_1-k_1)^4 + k_2(n_2 - k_2)^4 + O(n^4), \\
c_{(3,2)\tN(1,0)}(k_1,k_2) &=& (k_1+k_2) (n_1-k_1)^2 (n_2 - k_2)^2 + O(n^4).
\end{eqnarray*}
Note these functions have four local minima, corresponding to $(k_1,k_2)=(0,0),(n_1,0),(0,n_2),(n_1,n_2)$.

\begin{figure}[b!]
\includegraphics[width=0.206\linewidth,valign=t]{{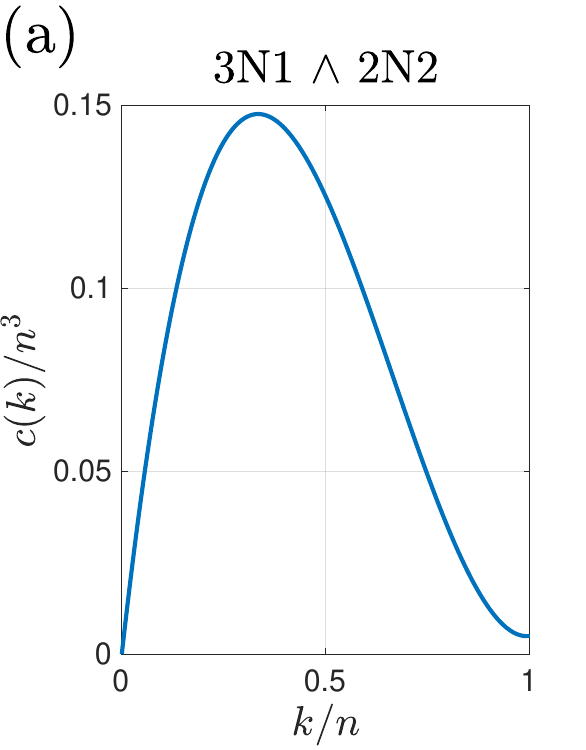}}
\includegraphics[width=0.397\linewidth,valign=t]{{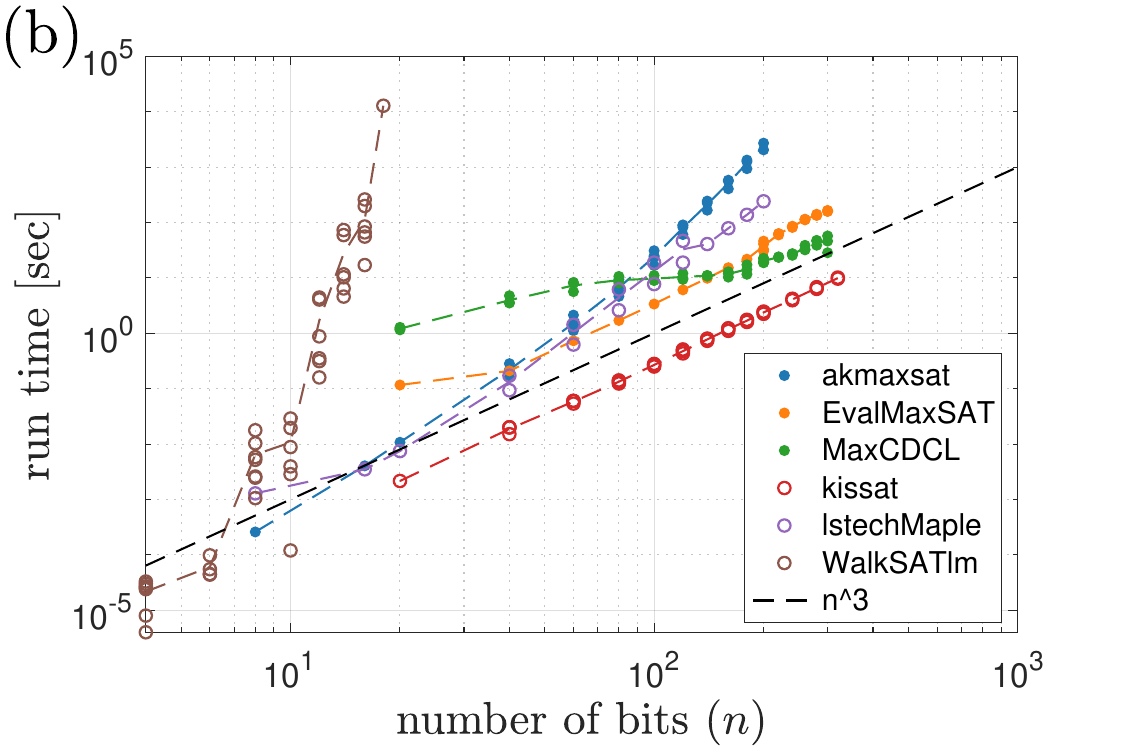}}
\includegraphics[width=0.397\linewidth,valign=t]{{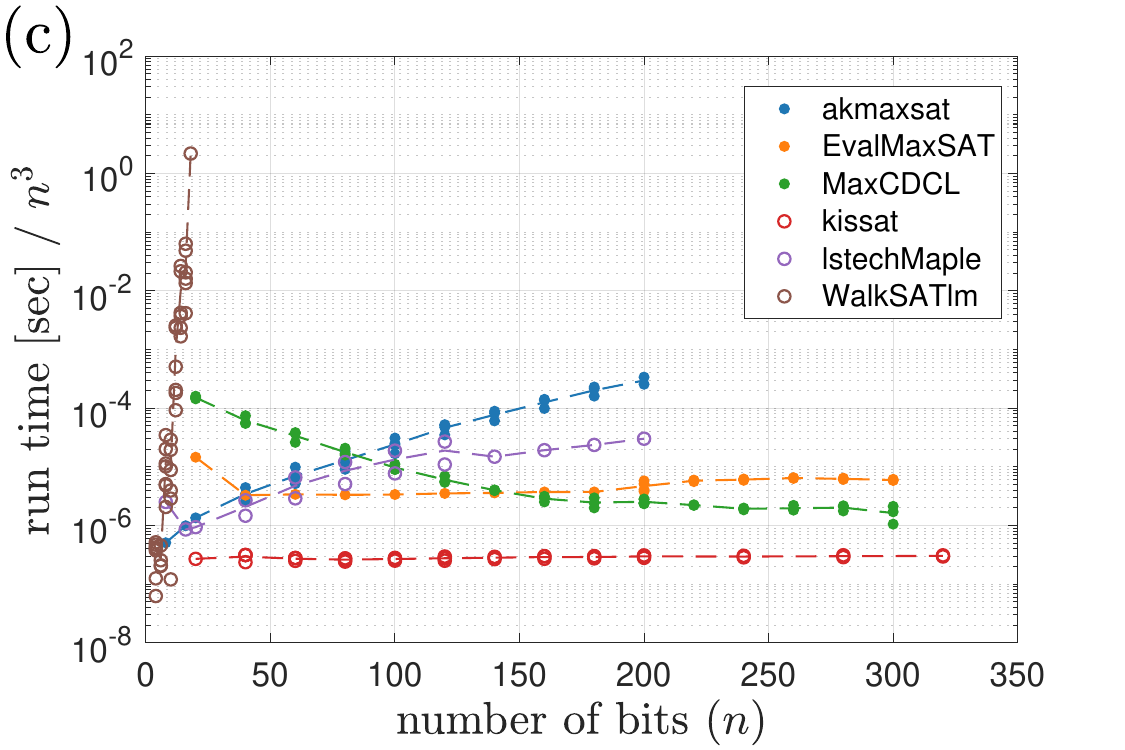}}
\caption{Performance of various classical SAT and Max-SAT solvers on a family of $S_n$-symmetric 3-SAT problems.
(a) Example cost function plotted versus normalized Hamming distance to the hidden string $\sv$ at $n=200$.
(b) Run times plotted as a function of number of bits $n$ on a log-log scale, where $n^3$ is plotted as a guide to the eyes. Individual dots correspond to runs with different choices of $\sv$, and dashed line connects the averages at each $n$.
(c) Run times divided by $n^3$, which is roughly the number of clauses in the formula, plotted on a log-linear scale. The flat lines of \alg{kissat}, \alg{EvalMaxSAT} and \alg{MaxCDCL} indicate their run time scales linearly with the problem size $\Theta(n^3)$.
\label{fig:benchmark-SAT3}
}
\end{figure}

\paragraph{Numerical Benchmark Results.}

For our numerical benchmarks, we have selected the following six state-of-the-art SAT and Max-SAT solvers:
\begin{enumerate}
\item \alg{WalkSATlm} (SAT solver): An improved version of the famous \alg{WalkSAT} algorithm. This was found to be best performant classical algorithm for random $\ell$-SAT in \cite{boulebnane2024SAT}.
\item \alg{kissat} (SAT solver): 2nd in Main Track UNSAT of SAT Competition 2021.
\item \texttt{lstechMaple} (SAT solver): 2nd in Main Track SAT in SAT Competition 2021.
\item \texttt{akmaxsat} (Max-SAT solver): Winner of MaxSAT Evaluation 2010, 2011, and 2012.
\item \texttt{EvalMaxSAT} (Max-SAT solver): Winner of Unweighted Exact Track in MaxSAT Evaluation 2023.
\item \texttt{MaxCDCL} (Max-SAT solver): Winner of Weighted Exact Track in MaxSAT Evaluation 2023.
\end{enumerate}

\begin{figure}[b!]
\includegraphics[width=0.5\linewidth]{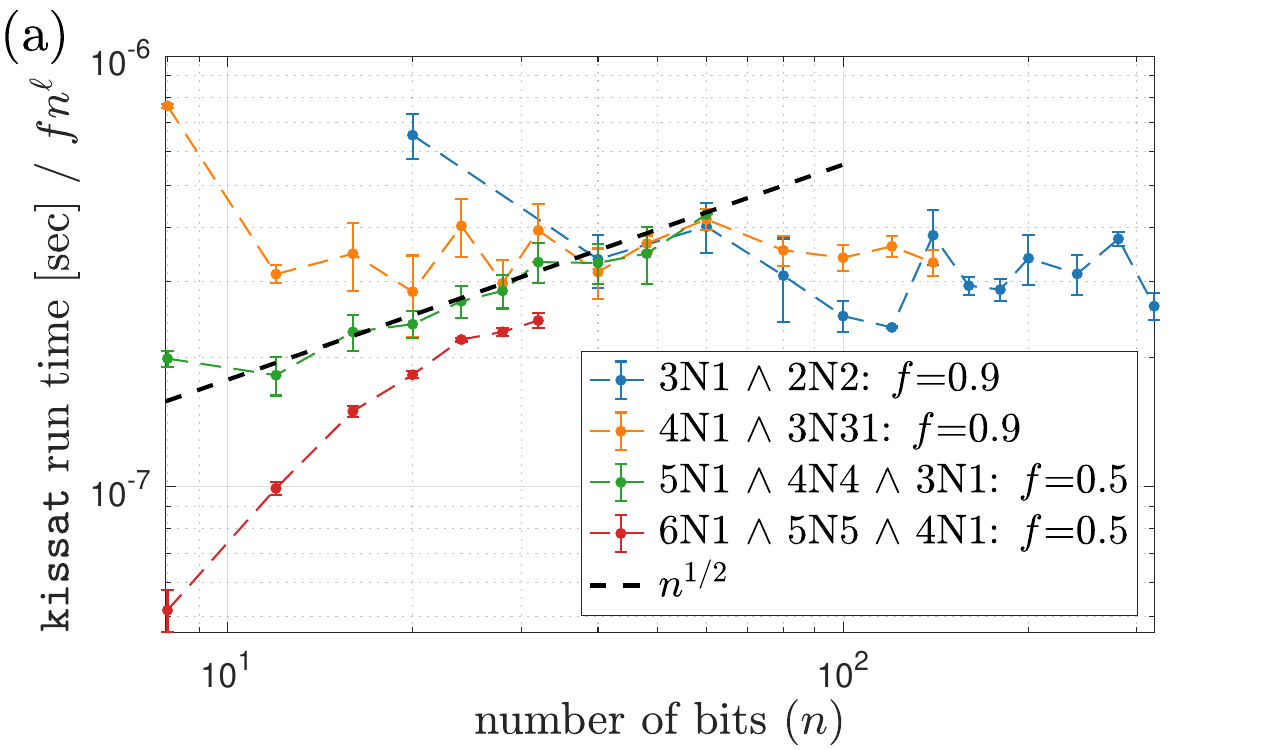}
\includegraphics[width=0.5\linewidth]{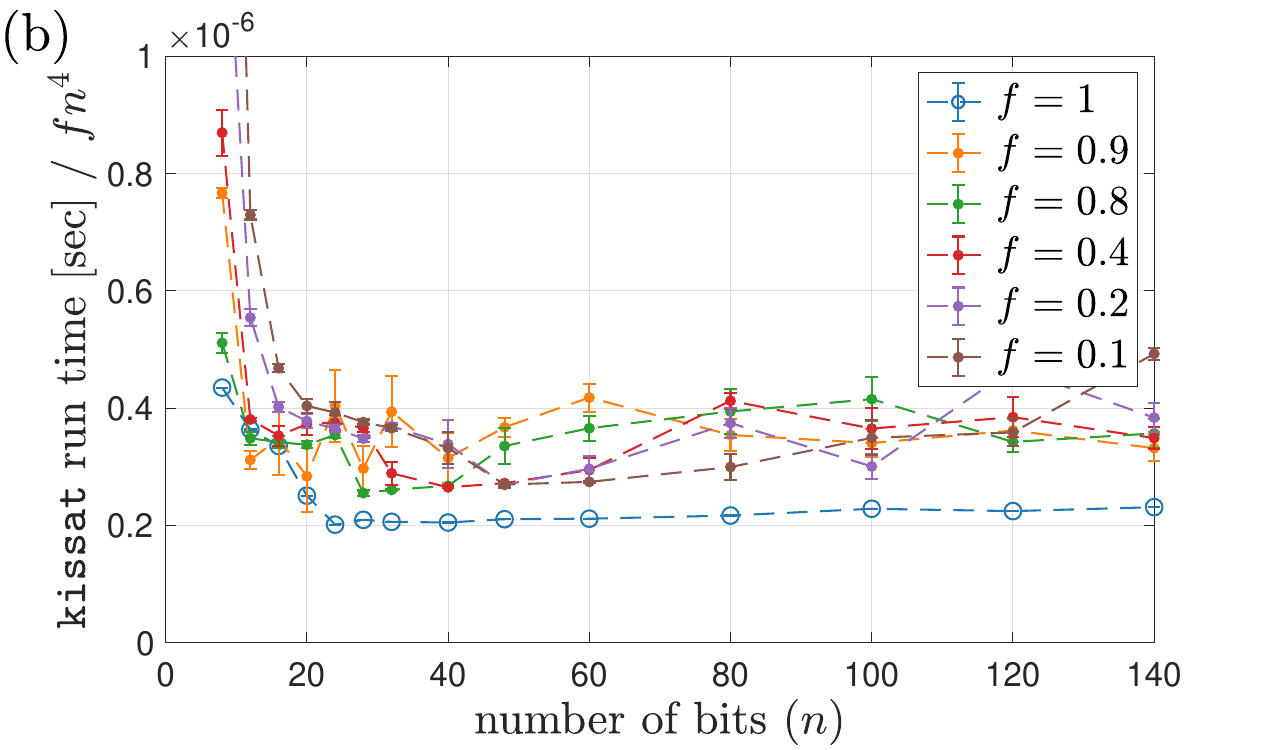}
\caption{\alg{kissat} run time rescaled by problem size for different near-$S_n$-symmetric $\ell$-SAT problems, where we explore the dependence with respect to locality $\ell$ in (a) and the sparsification fraction $f$ in (b). Here $f$ is the fraction of clauses sampled from the corresponding symmetric $\ell$-SAT problems to construct the near-symmetric problem. Dashed lines connect averages over 3$\sim$5 instances with different choices of $\sv$ and clause sampling realizations. Error bars are standard errors of the mean.
\label{fig:kissat-locality}}
\end{figure}

\begin{figure}[htb]
\includegraphics[width=0.206\linewidth,valign=t]{{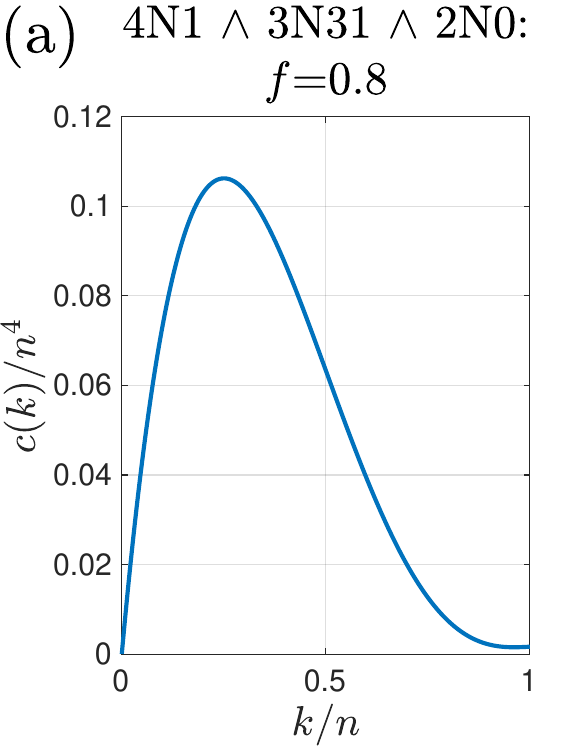}}
\includegraphics[width=0.397\linewidth,valign=t]{{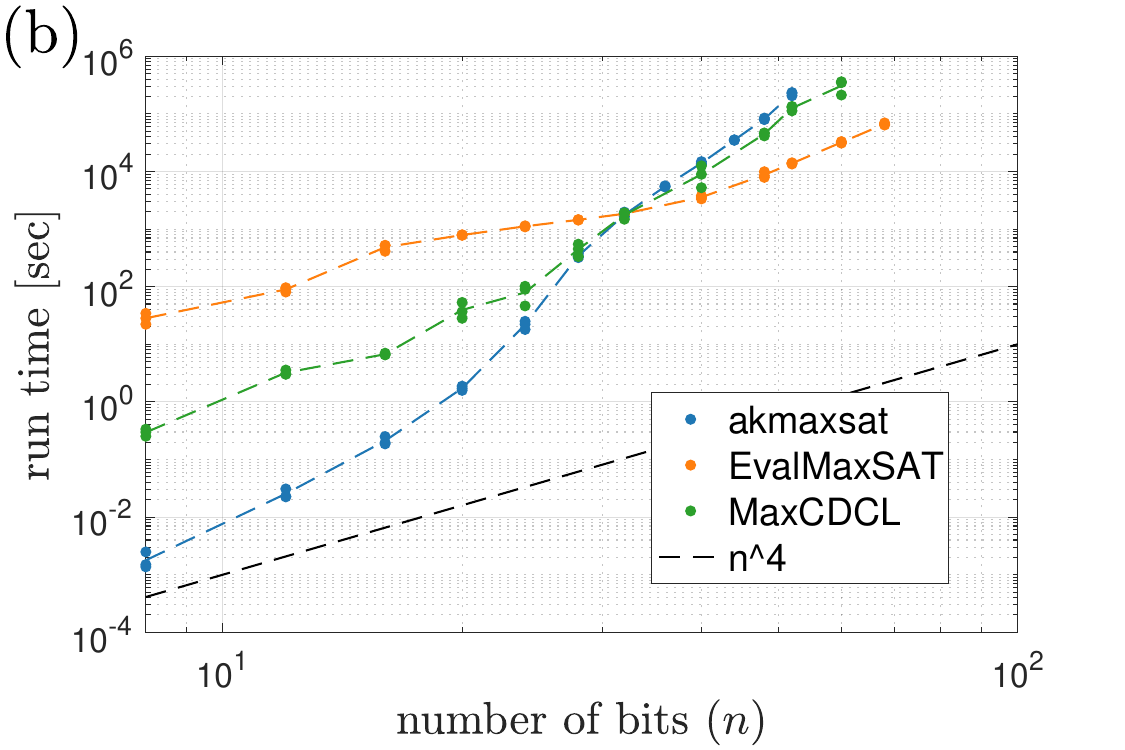}}
\includegraphics[width=0.397\linewidth,valign=t]{{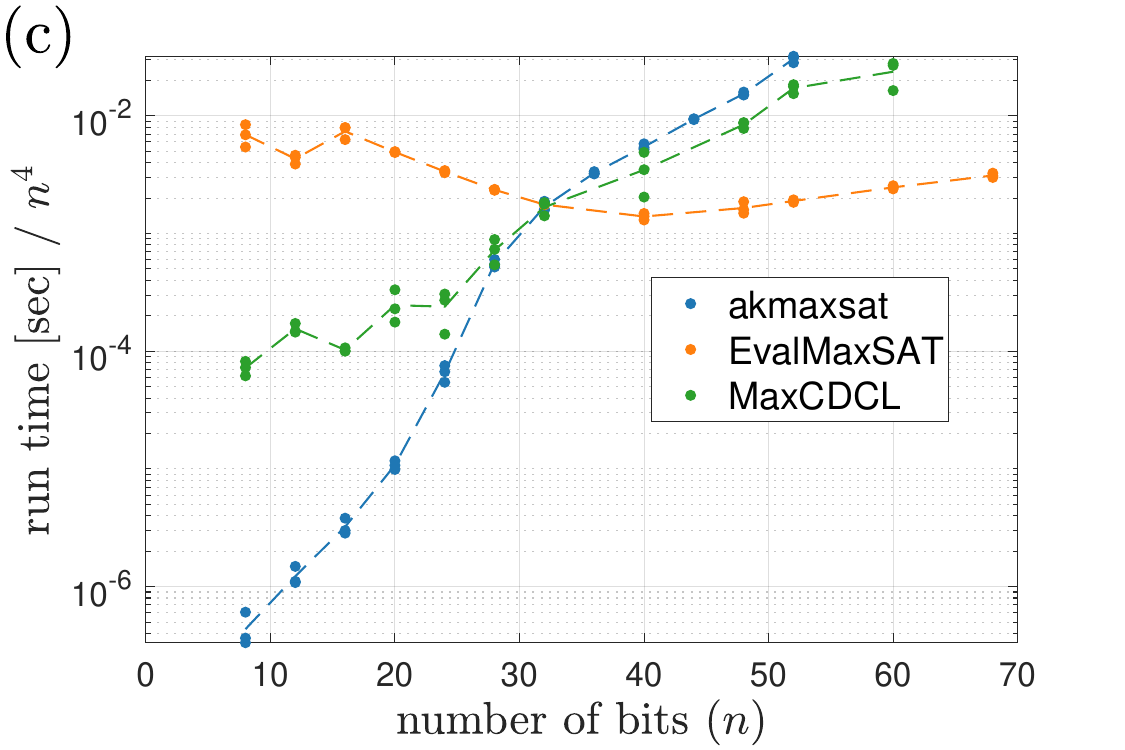}}\\
\includegraphics[width=0.206\linewidth,valign=t]{{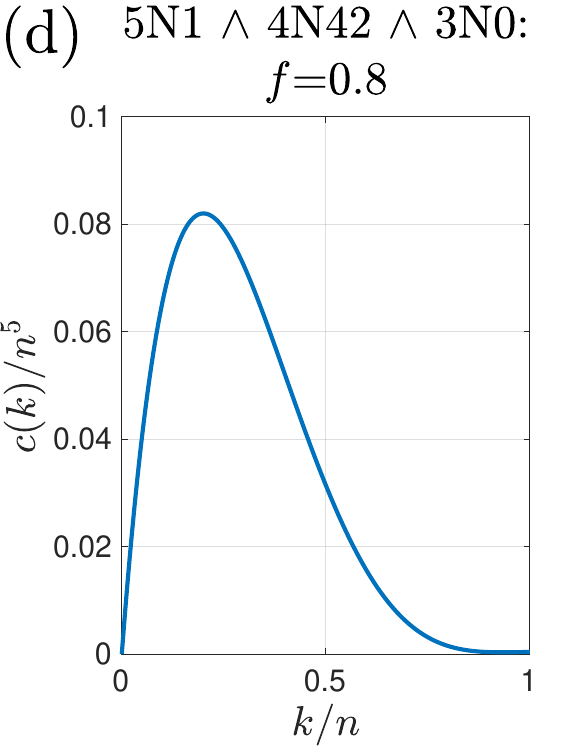}}
\includegraphics[width=0.397\linewidth,valign=t]{{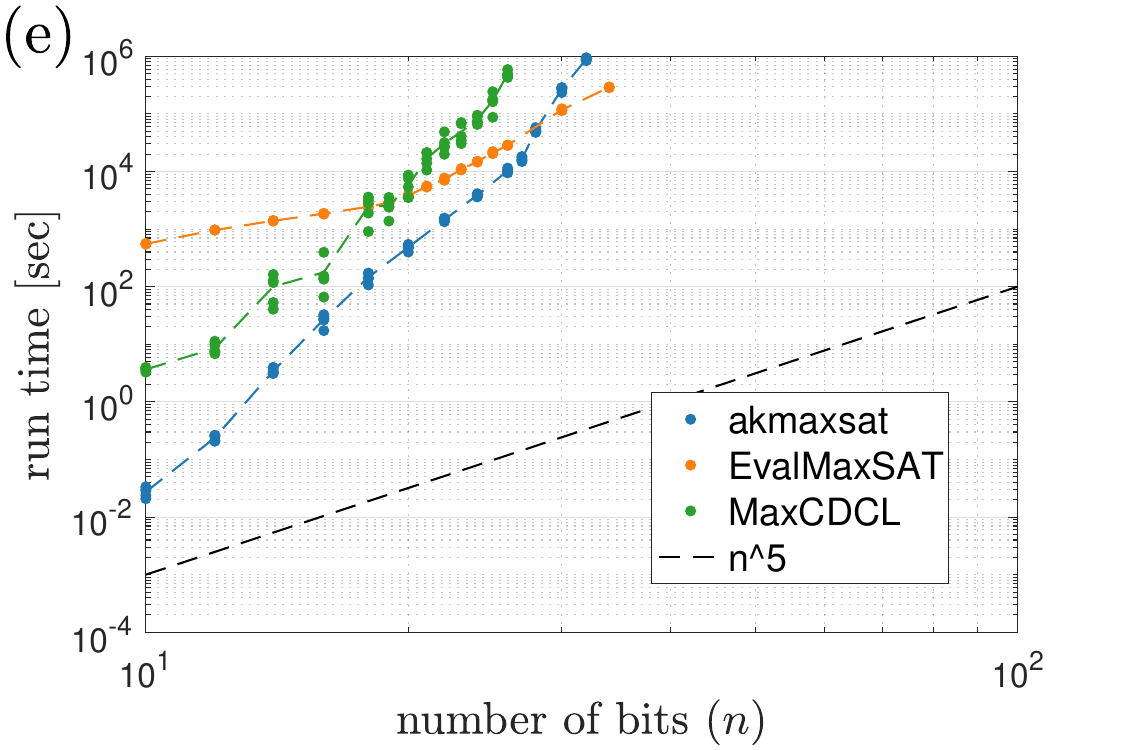}}
\includegraphics[width=0.397\linewidth,valign=t]{{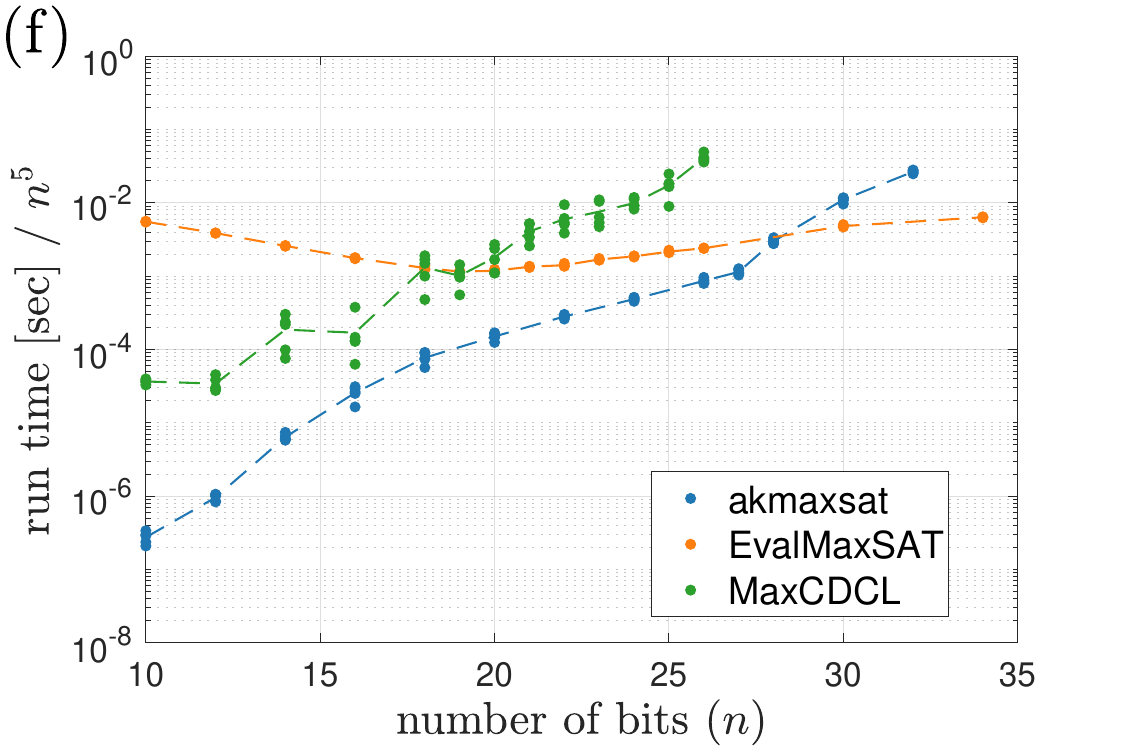}} 
\caption{Performance of classical Max-SAT solvers on two families of near-$S_n$-symmetric Max-$\ell$-SAT problems for $\ell=4$ (top row) and $\ell=5$ (bottom row), constructed with clause sampling.
(a)(d) Plot of the energy landscape as a function of the normalized Hamming distance to $\sv$.
(b)(e) Run times plotted as a function of number of bits $n$ on a log-log scale.
(c)(f) Run times divided by $n^\ell$ plotted as a function of $n$ on a log-linear scale.
Individual dots correspond to 3 problem instances at each $n$ with different clause sampling and choices of $\sv$, and dashed lines connect their averages.
The results suggest that \texttt{akmaxsat} and \texttt{MaxCDCL} take exponential time on both problems, but \texttt{EvalMaxSAT} seems to run in $O(n^\ell)$ time.
\label{fig:benchmark-SAT4}
}
\end{figure}

In Figure~\ref{fig:benchmark-SAT3}, we show our benchmark result on the symmetric 3-SAT problem $3\tN1 \wedge 2\tN2$. We observe that \alg{WalkSATlm}, the algorithm previously studied in Ref.~\cite{boulebnane2024SAT} as the best classical solver for random $\ell$-SAT and a key competitor against QAOA, turns out to be the worst out of the six algorithms studied here and runs in exponential time. The run time scaling for the other algorithms are milder. The best classical SAT solver for this problem is \alg{kissat} which seems to run in time linear in the number of clauses $\Theta(n^3)$. Similar linear time scaling is also apparently achieved by Max-SAT solvers \alg{EvalMaxSAT} and \alg{MaxCDCL}, as shown by the lines flattening as $n\to\infty$ in Figure~\ref{fig:benchmark-SAT3}(c).

Since \alg{kissat} appears to be the best solvers for symmetric SAT problems, we benchmark its performance on more complicated problems with higher locality and broken symmetry in Figure~\ref{fig:kissat-locality}. The result in subplot (a) indicates that for $\ell \le 4$, \alg{kissat} can solve these near-symmetric $\ell$-SAT in time linear in problem size $\Theta(n^\ell)$. However, for 5-SAT, the \alg{kissat} appears to have an extra $\Theta(\sqrt{n})$ factor overhead in its run time. Since the QAOA can solve these $\ell$-SAT problems with $O(n^\ell)$ gates, there appears to be an $\Omega(\sqrt{n})$ quantum speedup for $\ell\ge 5$.
In Figure~\ref{fig:kissat-locality}(b), we explore the effect of varying the fraction $f$ of clauses kept in the sparsification process. We find that sparsification appears to generally make the problem harder for \alg{kissat}, but does not seem to alter the scaling behavior of run times relative to problem size beyond a constant factor.

\begin{figure}[htb]
\vspace{10pt}
\includegraphics[width=0.206\linewidth,valign=t]{{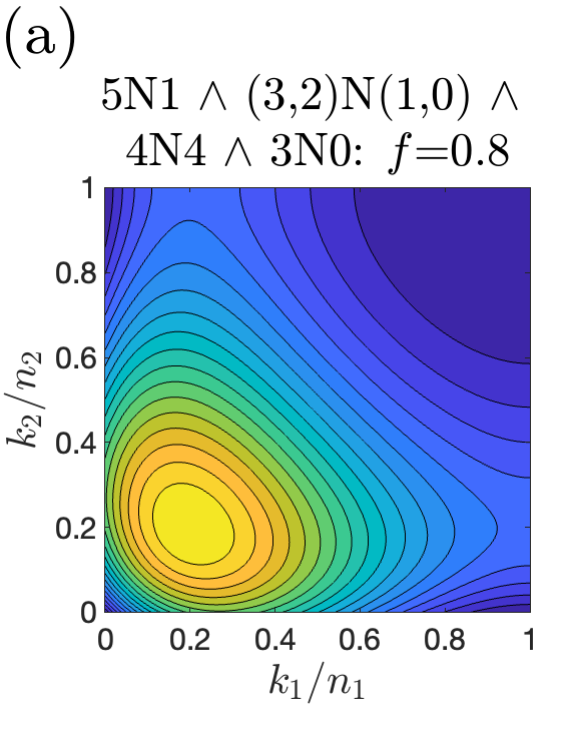}}
\includegraphics[width=0.397\linewidth,valign=t]{{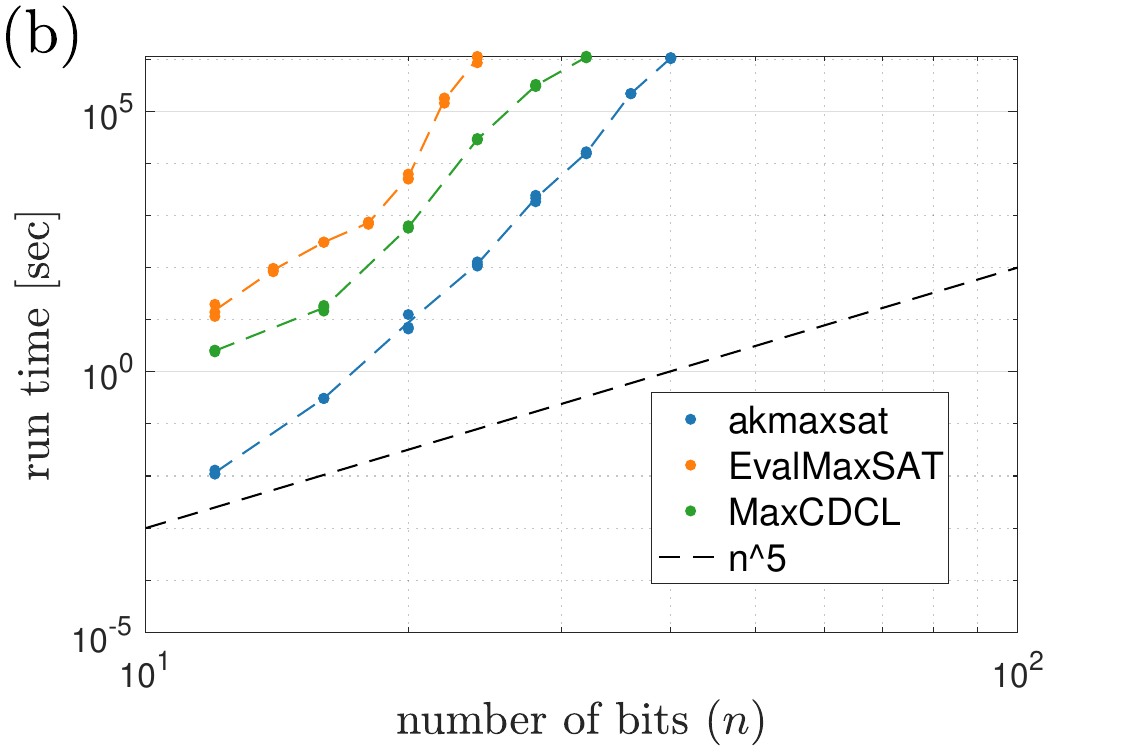}}
\includegraphics[width=0.397\linewidth,valign=t]{{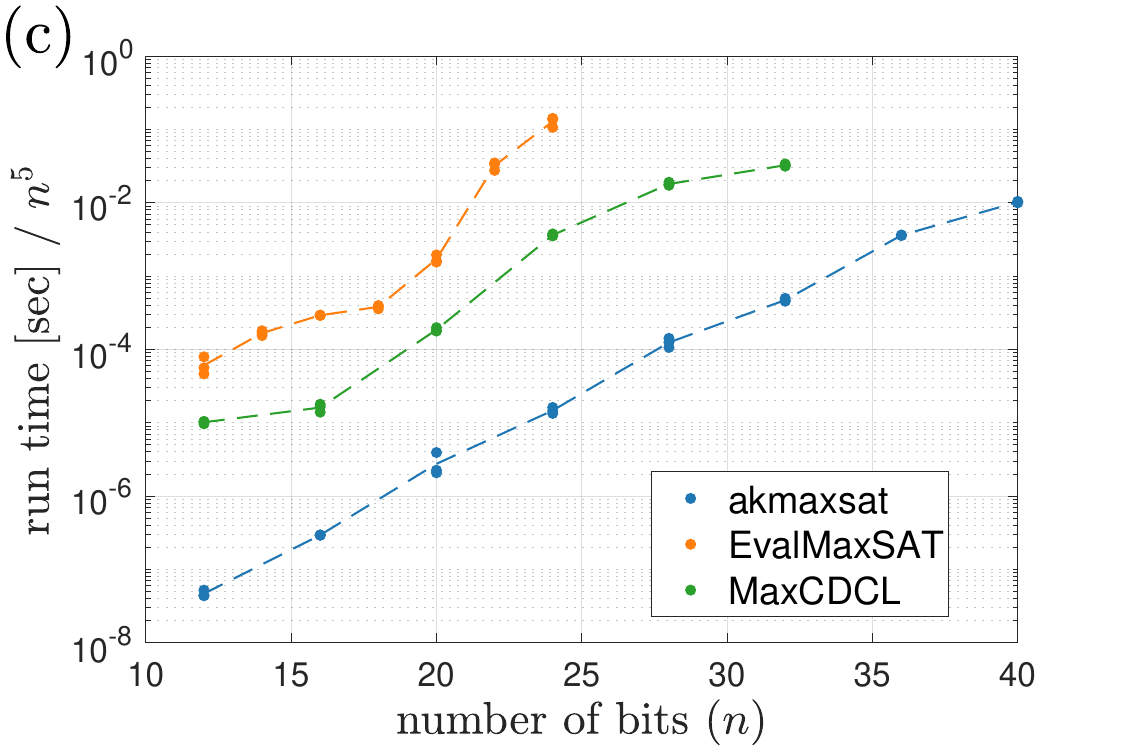}} \\
\includegraphics[width=0.206\linewidth,valign=t]{{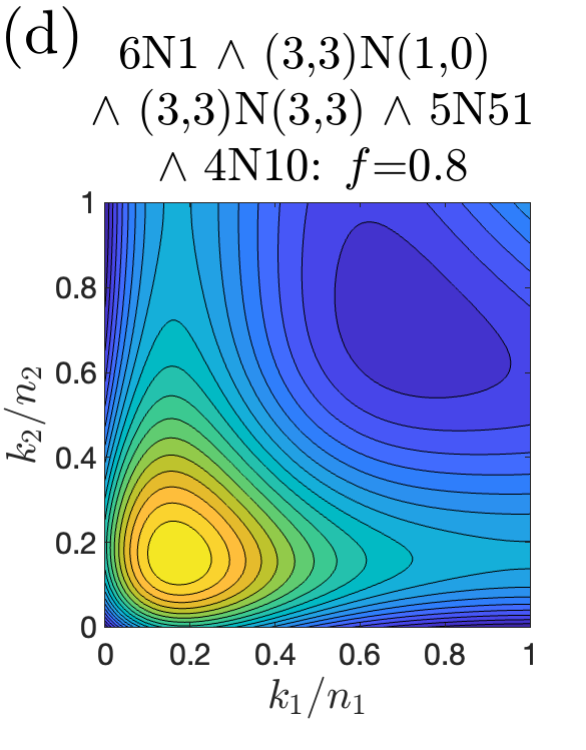}}
\includegraphics[width=0.397\linewidth,valign=t]{{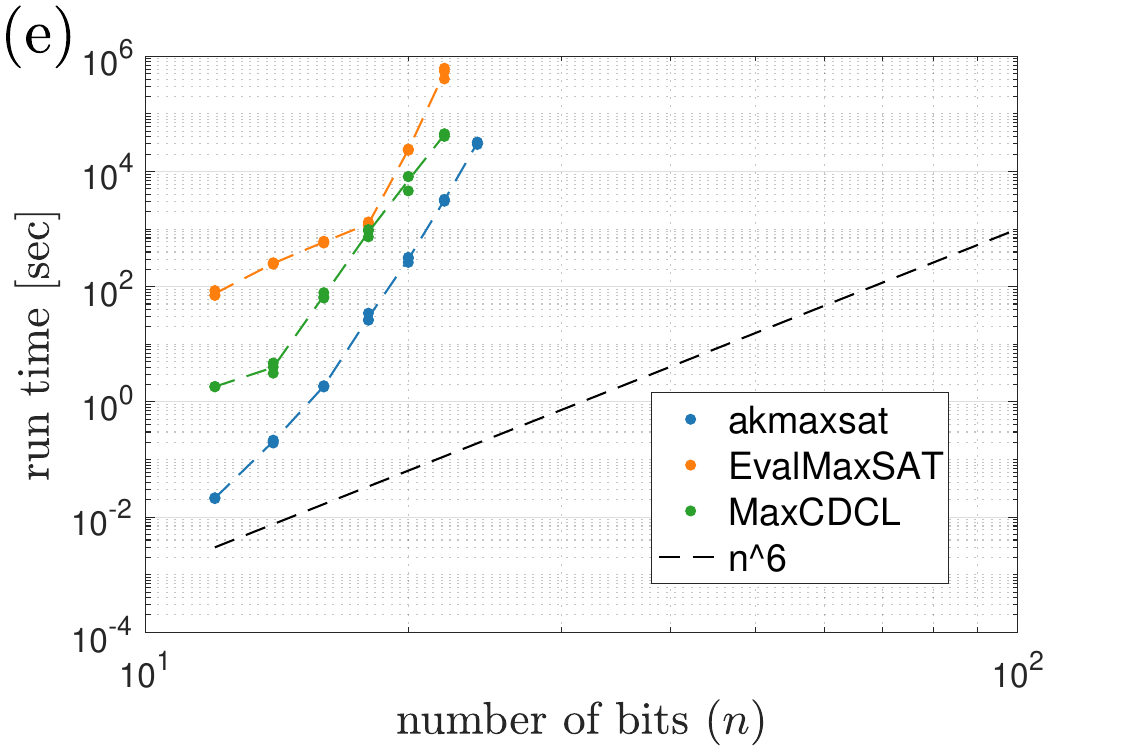}}
\includegraphics[width=0.397\linewidth,valign=t]{{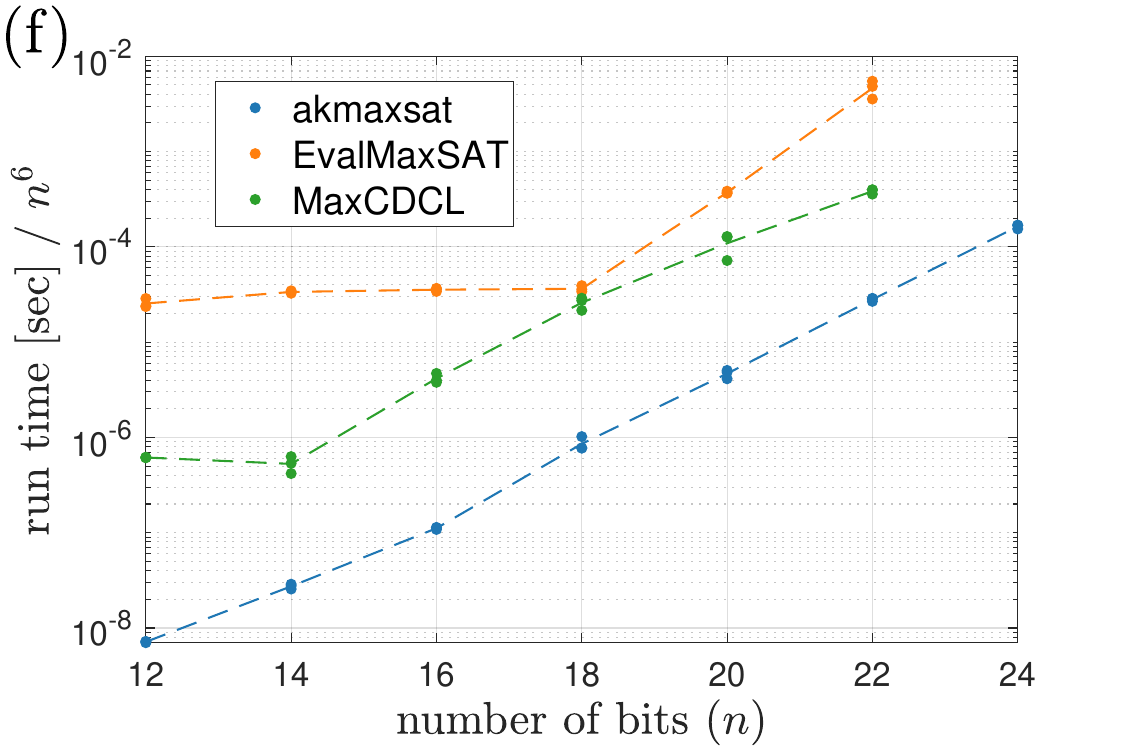}}
\caption{Performance of classical Max-SAT solvers on two families of near-$(S_{n/2})^2$-symmetric Max-$\ell$-SAT problems for $\ell = 5$ (top row) and $\ell=6$ (bottom row), constructed with clause sampling.
(a)(d) Contour plots of the energy landscapes of the two families of Max-$\ell$-SAT problems, as a function of the two subset Hamming distances to $\sv$.
(b)(e) Run times of various classical algorithms plotted as a function of number of bits $n$ on a log-log scale. (c)(f) Run times divided by $n^\ell$ plotted on a log-linear scale. Individual dots correspond to 3 problem instances at each $n$ with different clause sampling and choices of $\sv$, and dashed lines connect their averages.
For both problems, all three algorithms appear to require $\exp(n)$ run time in the large $n$ limit.
\label{fig:5SAT}
}
\end{figure}

We next consider near-symmetric Max-SAT problems. As an example, we consider $4\tN1 \wedge 3\tN 31 \wedge 2\tN0$ randomly sparsified by keeping each clause with probability $f$. The unsparsified $S_n$-symmetric cost function is visualized in Figure~\ref{fig:benchmark-SAT4}(a). Since these problems are not satisfiable by construction, SAT-only solvers such as \alg{kissat}, \alg{lstechMaple}, and \alg{WalkSATlm} are not applicable.
We benchmark the other three Max-SAT solvers on the near-$S_n$-symmetric $4\tN1 \wedge 3\tN 31 \wedge 2\tN0$ problem where $f=0.8$ fraction of the clauses are randomly included.
The run time results in Figure~\ref{fig:benchmark-SAT4}(b)(c) show that this problem is harder for \alg{MaxCDCL}, which seems to take exponential time. However, \alg{EvalMaxSAT} still appears to be able to solve these instances in $\Theta(n^4)$ which is linear in the problem size.
We also considered a family of near-$S_n$-symmetric Max-5-SAT problem in Figure~\ref{fig:benchmark-SAT4}(d)(e)(f). Nevertheless, the run time scaling of \alg{EvalMaxSAT} still appears approximately linear in the problem size $\Theta(n^5)$ despite the higher locality.

Finally, we consider Max-$\ell$-SAT problems that approximately exhibit $(S_{n/2})^2$ symmetry and hope that the additional local minima may pose further challenge to the classical algorithms.
The results for two families of such problems, $\ell=5$ and 6 respectively, are plotted in Figure~\ref{fig:5SAT}.
Both families of problems appears to stump all the Max-SAT solvers with exponential run time. Since our Theorems~\ref{thm:prod-sym} and \ref{thm:sparsified} imply that 1-step QAOA can solve them with $\Omega(1)$ probability using $O(n^\ell)$ quantum gates, these two families of near-$(S_{n/2})^2$-symmetric Max-$\ell$-SAT provides an example optimization problem that gives the QAOA an exponential speedup over the state-of-the-art classical Max-SAT solvers.

\section{Discussion}

We have studied the performance of the QAOA on symmetric and near-symmetric combinatorial optimization problems, and found evidence that even 1-step QAOA can outperform state-of-the-art classical algorithms and offer exponential speedups in some cases.
The symmetric optimization problems we consider are maximum constraint satisfaction problems (Max-CSPs) where the cost function is invariant under a permutation subgroup acting on the bits, relative to a solution bit string. These symmetries enable rigorous analysis of the QAOA, and we derive the closed form expression for its success probability, uncovering intriguing quantum mechanisms that allow the QAOA to quickly find the solution in many cases.

For example, when the cost function is purely a function of the Hamming distance to the solution and takes on mostly distinct values, we show that the 1-step QAOA has $\Omega(1/\sqrt{n})$ success probability in finding the solution. For cost functions that take on specific forms with multiple local minima that confuse classical algorithms, we show that 1-step QAOA can find the solution with $\Omega(1)$ probability.
This means with repeated measurements, the QAOA can find the solution with $O(\sqrt{n})$ or $O(1)$ queries to the cost function, using $\poly(n)$ time whenever the cost function has a polynomial-sized instantiation.

These results allow us to demonstrate significant quantum speedups by the QAOA over state-of-the-art classical algorithms on these symmetric and near-symmetric problems.
In the query complexity setting, we show that for $S_n$-symmetric problems, any classical algorithm provably requires at least $\Omega(n/\log n)$ queries to the cost function, even knowing in advance the symmetry and shape of the cost function. 
Depending on the problem, this is quadratically or superpolynomially worse compared to the quantum query complexity using 1-step QAOA.
In the more general setting,  we explicitly design families of symmetric and near-symmetric Max-SAT problems and study the performance of leading classical optimization algorithms.
These algorithms include simulated annealing and several state-of-the-art classical SAT and Max-SAT solvers that are recent winners of SAT and Max-SAT competitions, whose run time scaling we study through numerical benchmark on problems involving up to hundreds of bits.
Our results show that  for some of these problems, all of these classical algorithms take exponential time to find the solution. This implies an exponential separation between 1-step QAOA and general-purpose classical algorithms.

Our work opens up several intriguing future research directions.
Although symmetry might appear important for the QAOA to succeed, we have shown that our result is robust even when the symmetry is broken by randomly sampling the clauses. In particular, we show that the QAOA can succeed with $\Omega(1)$ probability even when taking a symmetric Max-$\ell$-SAT problem with $\Theta(n^\ell)$ clauses and sampling only a subset of $\Theta(n^2)$ clauses, which may also enable a practical implementation on near-term quantum computers.
Furthermore, our random sparsification is just one possible method of breaking the symmetry while maintaining high success probability for the QAOA. It would be interesting to explore other ways that an optimization problem can be approximately symmetric and still yield a quantum speedup.

While we have mostly focused on comparisons to state-of-the-art general-purpose classical algorithms for the sparsified near-symmetric SAT problems, it is likely that a tailored classical algorithm that knows the potential presence of a symmetry can solve these problems faster in some cases.
For example, a simple classical attack knowing how we constructed our instances in Section~\ref{sec:benchmark} is to flip all the bits after reaching a local minima in our near-$S_n$-symmetric problems, or to use community detection to identify the two subsets in our near-$(S_{n/2})^2$-symmetric problems and flip one or both subsets of bits at a local minimum; however, this line of attack may be thwarted by random constructions drawing from a large set of more complicated symmetries and energy landscapes, perhaps with help from cryptographic methods. 
There are also classical attacks that attempt to learn the hidden string from observing the sampled clauses, which would work on some of our constructed problems where clauses of a known locality encode the secret string as a unique satisfying solution; however, we believe this approach should not work for general, frustrated near-symmetric problems, such as the sparse LPN problem discussed in Remark~\ref{rem:LPN}.
We leave the construction of near-symmetric problems that are easy for QAOA while resistant to tailored classical attacks as future work.

We believe that studying optimization problems with symmetries enables better understanding of the unique behaviors of quantum algorithms and opens up new possibilities of superpolynomial quantum speedups. Our result has focused on two example symmetries, namely $S_n$ and $S_{n_1}\times S_{n_2}$ that permute the order of bit variables. As seen in our results, going from $S_n$ to $(S_{n/2})^2$ produces more complex problems with possibility of additional local minima that confuse classical algorithms and lead to an exponential quantum speedup over the best general-purpose classical solvers.
Therefore, it would be interesting to consider extensions of our work to other symmetry groups, including not only those that permute the bits but also those that act on the bit strings directly. We remark that the analysis of quantum algorithms like the QAOA is tractable as long as the number of orbits from the group action is polynomially bounded. 
There is a rich landscape of symmetric optimization problems awaiting exploration for large quantum speedups.

\subsection*{Acknowledgments}
We thank Edward Farhi and Sam Gutmann for providing helpful comments and discussions of classical attacks.
LZ thanks Song Mei for suggesting a more rigorous approach to approximate the binomial sums.
AM and LZ acknowledge funding from the European Research Council (ERC) under the European Union's Horizon 2020 research and innovation programme (grant agreement No. 817581).

\bibliographystyle{hunsrtnat}
\bibliography{refs}

\end{document}